\documentclass{article}
\PassOptionsToPackage{compatV3}{fancyhdr}
\usepackage{iclr2027_conference,times}

\usepackage[utf8]{inputenc}
\usepackage[T1]{fontenc}
\usepackage{xcolor}
\usepackage[colorlinks=true, linkcolor={blue!60!black}, citecolor={blue!60!black}, urlcolor={blue!60!black}]{hyperref}
\usepackage{url}
\usepackage{booktabs}
\usepackage{multirow}
\usepackage{amsfonts}
\usepackage{nicefrac}
\usepackage{microtype}

\usepackage{graphicx}
\usepackage{wrapfig}
\usepackage{algorithm}
\usepackage{algorithmic}
\usepackage{enumitem}
\usepackage{subcaption}
\usepackage{amsmath}
\usepackage{amssymb}
\usepackage{mathtools}
\usepackage{amsthm}
\usepackage{bm}
\usepackage{xspace}

\newcommand{\AutoAdjust}[3]{{ \mathchoice{ \left #1 #2  \right #3}{#1 #2 #3}{#1 #2 #3}{#1 #2 #3} }}

\newcommand{\InNorms}[1]{\AutoAdjust{\|}{#1}{\|}}

\newcommand{\F}{Figure}
\newcommand{\T}{Table}
\newcommand{\A}{Algorithm}

\theoremstyle{plain}
\newtheorem{theorem}{Theorem}[section]
\newtheorem{proposition}[theorem]{Proposition}

\theoremstyle{definition}
\newtheorem{definition}[theorem]{Definition}

\theoremstyle{remark}

\newcommand\vecx{{\ensuremath{\bm x}\xspace}}

\newcommand{\RETURN}{\STATE \textbf{Return }}

\title{Guided Diffusion for Trustworthy Synthetic Text Generation}

\author{Qi Pang$^{1,2}$\thanks{Part of this work was done during an internship at NVIDIA. Email: \texttt{qipang@cmu.edu}.}\hspace{0.45em}, Hanshen Xiao$^{1,3}$, Andre Manoel$^{1}$, Lipika Ramaswamy$^{1}$, \\
  \textbf{Wenting Zheng$^{2}$, Virginia Smith$^{2}$, G.~Edward Suh$^{1}$} \\
  $^{1}$NVIDIA \quad $^{2}$Carnegie Mellon University \quad $^{3}$Purdue University \\
}

\iclrfinalcopy
\begin{document}

\maketitle
\lhead{}
\renewcommand{\headrulewidth}{0pt}

\begin{abstract}
The rapid evolution of Large Language Models is fueled by massive datasets, yet this reliance poses severe privacy risks, as models can inadvertently memorize and even directly output sensitive training data.
Synthetic text generation offers a promising solution to this challenge; however, existing methods struggle to reconcile rigorous privacy guarantees with high data utility.
In this paper, we propose a new paradigm for private synthetic text generation: Guided Diffusion.
Leveraging the unique steerability of Diffusion Language Models, we introduce novel sampling and re-masking procedures that actively guide the iterative denoising steps.
These mechanisms steer the generative process to reconstruct fluent text that adheres to a target semantic distribution.
We demonstrate that our framework is naturally compatible with Differential Privacy (DP): once the semantic embeddings of the private corpus are protected, the DP guarantee carries over to the generated text by post-processing.
Empirically, we show that our approach generates high-quality synthetic text with precise semantic alignment.
Furthermore, we validate its effectiveness in DP synthetic text generation: across two datasets and two diffusion backbones, it outperforms Private Evolution and DP-SGD-based fine-tuning in distributional fidelity and downstream utility, while generating significantly faster than Private Evolution.
We further extend the framework to Data-Specific Indistinguishability (DSI), which adds a formal certificate against data poisoning and memorization at a small utility cost.
\end{abstract}

\section{Introduction}
\label{sec:intro}

The rapid evolution of Large Language Models (LLMs) has fundamentally reshaped Natural Language Processing, a paradigm shift driven primarily by the massive scale of training data.
However, this reliance on web-scale data accumulation comes with a steep risk of exposing sensitive information.
Real-world datasets, ranging from private medical records and financial reports to personal communication logs, are invaluable for training robust models, but they often contain sensitive personal information that must not be exposed.
Recent studies have demonstrated that generative models are prone to memorization~\citep{carlini2021extracting}, where they can inadvertently regenerate training samples verbatim.
This vulnerability poses severe ethical concerns and violates strict privacy regulations such as GDPR~\citep{voigt2017eu} and CCPA~\citep{ccpa2018}.

To address this, synthetic text generation has emerged as a promising solution.
Ideally, a synthetic dataset mirrors the statistical properties and semantic richness of the real distribution while ensuring that no individual data point is leaked.
However, existing approaches typically struggle to balance these objectives, falling into two broad categories.
The first relies on \textit{heuristic sanitization}, such as entity masking or scrubbing~\citep{lison-etal-2021-anonymisation}.
While these methods preserve linguistic fluency, they lack formal guarantees and have been repeatedly shown to fail against sophisticated reconstruction attacks~\citep{carlini2021extracting, balle2022reconstructing, zhou2024model}.
The second category employs \textit{rigorous formalisms} like Differential Privacy (DP), often applied via DP-SGD~\citep{abadi2016deep}.
While provably privacy-preserving, these methods frequently result in a catastrophic loss of utility when applied to text, yielding outputs that are repetitive or incoherent.

One fundamental limitation behind such heavy utility compromise is that provable privacy guarantees such as DP rely on noise to ensure {\em statistical closeness}.
For example, DP requires statistical indistinguishability between the leakage produced by any two adjacent datasets for individual privacy.
To support this, existing noise mechanisms require calibration to the absolute divergence or distance.
However, text representations in vocabulary space have extremely high variance.
For instance, the sentences ``The movie is excellent'' and ``This is a superb film'' are semantically identical yet possess disjoint vocabulary representations (orthogonal one-hot vectors).
Standard distance metrics in this space fail to capture semantic proximity, and the required noise scale to mask these differences is often prohibitively large, destroying the linguistic structure of the output.

In this paper, we propose a novel framework for synthetic text generation that seamlessly integrates with off-the-shelf DP clustering algorithms to provide rigorous privacy guarantees.
Our core insight is that, while raw text resides in a high-dimensional, discrete space, its semantic embedding can be captured on a low-dimensional, continuous manifold.
We observe that DP mechanisms can achieve significantly higher utility if applied within this compact embedding space instead of the sparse vocabulary space.
The challenge then shifts to a generation problem:
\begin{center}
    \emph{How to convert privatized, low-dimensional semantic embeddings into coherent, discrete text?}
\end{center}

While latent-conditioned generation is well-established in computer vision (e.g., image editing~\citep{rombach2022high, ye2023ip}), transferring this success to text is challenging due to the discrete nature of language.
Auto-regressive (AR) LLMs, which generate tokens sequentially, are notoriously difficult to steer globally using a fixed embedding vector.
To bridge this gap, we leverage the unique properties of Diffusion Language Models~\citep[DLMs;][]{li2022diffusion, sahoo2024simple, nie2025large, you2025llada, song2025seed, yang2025mmada}.
Unlike AR models, DLMs predict blocks of tokens (sentences) through an iterative denoising process.
This paradigm allows us to effectively steer the generation by minimizing the distance between the generated sequence and the target semantic embedding at each diffusion step.

Specifically, we introduce a Guided Diffusion framework that generates coherent text conditioned on semantic embeddings.
To enable precise control, we propose two novel components for masked DLMs: {\em semantic-aware sampling} and {\em semantic-aware re-masking}.
They allow the DLM to align its discrete outputs with the continuous embedding guidance signal.
By integrating DP mechanisms directly into the embedding stage, we present the first framework capable of utilizing DLMs for synthetic data generation with rigorous privacy protection.
Beyond DP, the same framework accommodates Data-Specific Indistinguishability (DSI)~\citep{xiao2025trustworthy}, which targets risks such as data poisoning and unintended memorization (\S~\ref{sec:dsisyn}).
In summary, we make the following contributions:
\begin{itemize}[leftmargin=8pt, topsep=2pt]
    \item \textbf{Guided Diffusion Framework}: We propose a novel framework that, for the first time, exploits the steering capabilities of DLMs to generate coherent text conditioned on semantic embeddings.
    \item \textbf{DP and DSI Integration}: Our Guided Diffusion framework naturally accommodates off-the-shelf DP clustering mechanisms in the embedding space, and the DP guarantee carries over to the synthetic text by post-processing. The same framework also supports DSI aggregation, which provides formal guarantees against corrupted data sources and memorization (\S~\ref{sec:dsisyn}).
    \item \textbf{Empirical Effectiveness}: We validate our framework through extensive experiments, demonstrating superior synthetic data quality and strong downstream utility in DP synthetic text generation: across two datasets and two diffusion backbones, it outperforms Private Evolution and DP-SGD-based baselines in distributional fidelity and downstream utility, while generating substantially faster than Private Evolution. We will release our code publicly.
\end{itemize}

\section{Preliminaries}
\label{sec:preliminaries}

\noindent\textbf{Diffusion Language Models.}~Diffusion Language Models (DLMs) are non-autoregressive generative models that synthesize text through an iterative denoising process.
Unlike AR LLMs that generate tokens sequentially from left to right, DLMs generate a sequence of $L$ tokens in parallel by progressively refining a noisy input state.

We focus on the denoising process, illustrated in \F~\ref{fig:dlm-pipeline}.
Let $\vecx_t$ denote the token sequence at step $t$, where $t$ goes from $T$ down to $0$.
The process begins with $\vecx_T$, a fully masked sequence consisting entirely of the special token $\texttt{[MASK]}$.
The core component is a mask predictor model $\mathcal{M}_{\theta}$, which takes a partially masked sequence $\vecx_t$ as input and outputs a probability matrix $\mathbf{P}_t \in \mathbb{R}^{L \times |\mathcal{V}|}$ for $L$ positions over the vocabulary $\mathcal{V}$.
The transition from step $t$ to $t-1$ involves two critical operations:

\noindent\textit{Sampling.}~Given the current state $\vecx_t$ (e.g., ``{This paper $\texttt{[MASK]}$ a $\texttt{[MASK]}$ approach...}''), the model $\mathcal{M}_{\theta}$ predicts the probability distribution for every token position.
From this probability matrix $\mathbf{P}_t$, we independently sample a candidate token for each masked position to form a fully populated candidate sequence $\vecx'_{t-1}$, as an intermediate state (see \F~\ref{fig:dlm-pipeline}).

\noindent\textit{Re-Masking.}~Since early predictions may be inaccurate, we do not accept all sampled tokens immediately. We obtain a confidence score (probability) for each token in $\vecx'_{t-1}$.
As shown in \F~\ref{fig:dlm-pipeline}, tokens with high confidence (depicted in green) are retained, while tokens with low confidence (depicted in red) are reverted to the $\texttt{[MASK]}$ token.
This produces the input state $\vecx_{t-1}$ for the next iteration. This cycle repeats until $t=0$ and all masks are removed, resulting in the final coherent sentence.

\begin{figure}[!t]
\centering
\includegraphics[width=1.0\linewidth]{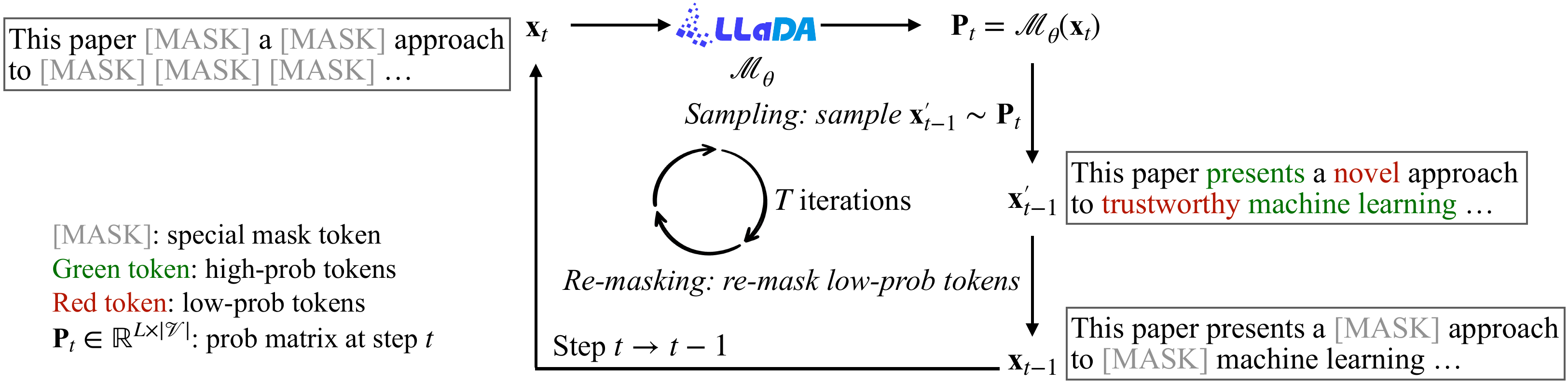}
\caption{High-level denoising pipeline for DLMs. At each step $t$, the model predicts token probabilities for the masked input $\vecx_t$. A candidate sequence $\vecx'_{t-1}$ is sampled, and low-confidence tokens are re-masked to produce the refined state $\vecx_{t-1}$.}
\label{fig:dlm-pipeline}
\end{figure}

\noindent\textbf{Differential Privacy.}~Differential Privacy~\citep[DP;][]{dwork2006calibrating} bounds the influence of any single record on a randomized algorithm's output. Throughout, two datasets $\mathcal{D}, \mathcal{D}'$ are \emph{neighboring} if one is obtained from the other by adding or removing a single record (add/remove adjacency). We present the formal definition for $(\varepsilon,\delta)$-DP.
\begin{definition}[$(\varepsilon,\delta)$-DP]
A randomized algorithm $\mathcal{A}$ satisfies $(\varepsilon, \delta)$-DP if for all neighboring datasets $\mathcal{D}, \mathcal{D}'$ and all measurable sets $S$,
\begin{equation*}
\Pr[\mathcal{A}(\mathcal{D}) \in S] \leq e^{\varepsilon} \Pr[\mathcal{A}(\mathcal{D}') \in S] + \delta.
\end{equation*}
\end{definition}
Smaller $\varepsilon$ and $\delta$ correspond to stronger privacy guarantees: an adversary distinguishing two neighboring datasets must accept a higher Type-I (or Type-II) error rate at any fixed Type-II (or Type-I) budget.
Our framework relies on three standard properties.

(i) The \emph{Gaussian mechanism}~\citep{balle2018improving} $\mathcal{A}(\mathcal{D}) = f(\mathcal{D}) + \mathcal{N}(\bm{0}, \sigma^2 I_d)$ on a vector-valued query $f$ with outputs in $\mathbb{R}^d$ and $L_2$ sensitivity $\Delta$ achieves $(\varepsilon, \delta)$-DP with
\begin{equation*}
\delta = \Phi\!\left(\tfrac{\Delta}{2\sigma} - \tfrac{\varepsilon \sigma}{\Delta}\right) - e^{\varepsilon}\, \Phi\!\left(-\tfrac{\Delta}{2\sigma} - \tfrac{\varepsilon \sigma}{\Delta}\right),
\end{equation*}
where $\Phi(\cdot)$ is the standard normal cumulative distribution function.

(ii) \emph{Post-processing invariance}: any function applied to the output of an $(\varepsilon, \delta)$-DP algorithm that does not access the private input incurs no additional privacy loss. In other words, once the noisy embeddings are released under DP, any text generated from them alone, without touching the private data again, inherits the same DP guarantee.

(iii) \emph{Parallel composition}: if a dataset is partitioned into disjoint subsets by a public rule and an $(\varepsilon, \delta)$-DP algorithm is run on each subset, the joint release is $(\varepsilon, \delta)$-DP under add/remove adjacency, since a single record affects only one subset. We use this to run DP clustering separately within each label class. Under replace-one adjacency, where a record may change its label and thereby move between two subsets, the same joint release is $(2\varepsilon, 2\delta)$-DP.

\section{Guided Diffusion Language Models}
\label{sec:guided-diffusion-design}

In this section, we present our framework of Guided Diffusion.
The primary objective is to steer the generative process such that the resulting text $\vecx$ achieves high similarity to a target semantic embedding $E_{\text{tgt}}$, while preserving linguistic fluency.
To achieve this, we introduce two novel mechanisms -- \textit{Semantic-Aware Sampling} and \textit{Semantic-Aware Re-Masking} -- and integrate them directly into the iterative denoising pipeline of standard DLMs.

Let $\mathcal{M}_\theta$ denote the diffusion mask predictor.
At each diffusion step $t$, the model receives a partially masked sequence $\vecx_t$ and outputs a probability matrix $\mathbf{P}_t \in \mathbb{R}^{L \times |\mathcal{V}|}$, where each row represents the probability distribution over the vocabulary $\mathcal{V}$ for the corresponding token position.
To output a coherent token block of length $L$, DLMs typically employ sampling and re-masking steps and iteratively remove the noise, which is the $\texttt{[MASK]}$ token, in the input.
Now we introduce our Semantic-Aware Sampling and Semantic-Aware Re-Masking designs.

\subsection{Semantic-Aware Sampling}
\label{subsec:guided-sampling}
Standard DLMs employ independent token sampling, where a candidate sequence is formed by sampling from the probability distribution at each token position independently.
While efficient, this approach is agnostic to the global semantic trajectory required for our specific guidance task.

To enforce semantic alignment, we propose a two-stage reranking procedure that selects among sampled candidates by balancing exploration, fluency, and semantic distance.
We present the algorithm in \A~\ref{alg:sampling}.
Specifically, at diffusion step $t$, given the probability matrix $\mathbf{P}_t$, we first generate a pool of $\alpha N$ candidate sequences $\{\hat{\vecx}'^{(k)}\}_{k=1}^{\alpha N}$ via independent sampling: $\hat{\vecx}'^{(k)} \sim \mathbf{P}_t$.

We enforce that all $\alpha N$ candidate sequences differ by at least one token.
We then evaluate these candidates using a hybrid scoring function to filter for quality and diversity:
\begin{equation*}
    S_k = a \cdot \log p_\theta(\hat{\vecx}'^{(k)}) + b \cdot \mathrm{Ham}(\hat{\vecx}'^{(k)}), \quad k \in [\alpha N],
\end{equation*}
where $\log p_\theta(\cdot)$ denotes the log-probability of the sequence under the model $\mathcal{M}_\theta$, and $\mathrm{Ham}(\cdot)$ measures the Hamming distance between $\hat{\vecx}'^{(k)}$ and the highest-likelihood sequence from $\mathbf{P}_t$.
$a$ and $b$ are scalar weights for the two scoring terms; the values used throughout our experiments are reported in Appendix~\ref{sec:implementation}, and Appendix~\ref{sec:appendix-sensitivity} shows that generation quality is insensitive to the ratio $b/a$ over the tested range.
We sort the candidates by $S_k$ and retain the top $N$ sequences, denoted as $\{\hat{\vecx}^{(k)}\}_{k=1}^{N}$.

\begin{algorithm}[!h]
\caption{Semantic-Aware Sampling}
\label{alg:sampling}
\begin{algorithmic}[1]
\REQUIRE Prob. matrix $\mathbf{P}_t$, target embedding $E_{\text{tgt}}$, constants $\alpha, a, b, N$, embedding model $\mathcal{E}$
\STATE Generate pool $\mathcal{C} = \{\hat{\vecx}'^{(k)}\}_{k=1}^{\alpha N}$ via $\hat{\vecx}'^{(k)} \sim \mathbf{P}_t$
\STATE Compute $S_k = a \cdot \log p_\theta(\hat{\vecx}'^{(k)}) + b \cdot \mathrm{Ham}(\hat{\vecx}'^{(k)}), k \in [\alpha N]$
\STATE Filter $\mathcal{C}$ to top $N$ candidates $\{\hat{\vecx}^{(k)}\}$ based on $S_k$
\STATE Compute $E^{(k)} = \mathcal{E}(\hat{\vecx}^{(k)})$ for filtered candidates
\STATE Select $\hat{\vecx}^* = \hat{\vecx}^{(k^*)}$, $k^* = \arg \min_k ||E^{(k)} - E_{\text{tgt}}||_2$
\RETURN $\hat{\vecx}^*$
\end{algorithmic}
\end{algorithm}

Next, we select the candidate that best aligns with the target semantics. For each retained candidate $\hat{\vecx}^{(k)}$, we compute its semantic embedding $E^{(k)} = \mathcal{E}(\hat{\vecx}^{(k)})$, where $\mathcal{E}(\cdot)$ is the pre-trained semantic encoder.
We evaluate the alignment with the target embedding $E_{\text{tgt}}$ using the Euclidean distance $d_k = ||E^{(k)} - E_{\text{tgt}}||_2$.
The optimal candidate $\hat{\vecx}^*$ is selected as the minimizer of this distance:
\begin{equation*}
    \hat{\vecx}^* = \hat{\vecx}^{(k^*)} \quad \text{where} \quad k^* = \arg \min_{k \in [N]} d_k
\end{equation*}
This selected sequence $\hat{\vecx}^*$ serves as the base estimation for the current step, effectively steering the generation manifold toward the target semantics before the re-masking phase.

\subsection{Semantic-Aware Re-Masking}
\label{subsec:guided-remasking}

After obtaining the candidate sequence $\hat{\vecx}^*$, the denoising process requires a re-masking step to transition to the next noisy state $\vecx_{t-1}$.
Standard strategies typically mask tokens based on low model confidence (low prediction probability) or simple random selection.
However, these heuristics may inadvertently discard tokens that are critical for semantic alignment or retain tokens that steer the text away from the target semantic embedding $E_{\text{tgt}}$.

To address this issue, we propose a contribution-based re-masking algorithm.
Our insight is to strictly retain tokens that maximally contribute to minimizing the distance to the target semantic embedding, while masking those that are less relevant.

\begin{algorithm}[!h]
\caption{Semantic-Aware Re-Masking}
\label{alg:remasking}
\begin{algorithmic}[1]
\REQUIRE Sequence $\hat{\vecx}^*$, target embedding $E_{\text{tgt}}$, embedding model $\mathcal{E}$, denoise schedule $\rho_t$
\STATE Compute base embedding $E_{\text{base}} = \mathcal{E}(\hat{\vecx}^*)$
\STATE \textbf{for} position $i = 1$ \textbf{to} $L$ \textbf{do}
\STATE \quad Ablate token $w_i \to \texttt{[MASK]}$ to get $\hat{\vecx}^*_{\setminus i}$
\STATE \quad Compute importance $\Delta_i = ||\mathcal{E}(\hat{\vecx}^*_{\setminus i}) - E_{\text{tgt}}||_2 - ||E_{\text{base}} - E_{\text{tgt}}||_2$
\STATE Identify indices $\mathcal{I}_{\text{mask}}$ corresponding to the $\rho_t/2$ smallest scores $\Delta_i$, where $|\mathcal{I}_{\text{mask}}| = \rho_t/2$
\STATE Update $\vecx_{t-1}$ by setting $w_i \leftarrow \texttt{[MASK]}$ for all $i \in \mathcal{I}_{\text{mask}}$
\STATE Identify indices $\mathcal{I}_{\text{low-prob}}$ corresponding to the $\rho_t / 2$ lowest prediction probabilities, where $|\mathcal{I}_{\text{low-prob}}| = \rho_t/2$ and $\mathcal{I}_{\text{mask}} \cap \mathcal{I}_{\text{low-prob}} = \emptyset$
\STATE Update $\vecx_{t-1}$ by setting $w_i \leftarrow \texttt{[MASK]}$ for all $i \in \mathcal{I}_{\text{low-prob}}$
\RETURN $\vecx_{t-1}$
\end{algorithmic}
\end{algorithm}

We present the algorithm in \A~\ref{alg:remasking}.
Let $w_1, w_2, \cdots, w_L$ be the $L$ tokens in $\hat{\vecx}^*$: $\hat{\vecx}^* = [w_1, w_2, \dots, w_L]$ and let $E_{\text{base}} = \mathcal{E}(\hat{\vecx}^*)$ denote its embedding.
To quantify the contribution of the $i$-th token $w_i$, we perform a granular ablation: we replace $w_i$ with the special $\texttt{[MASK]}$ token to create a perturbed sequence $\hat{\vecx}^*_{\setminus i}$ and compute its embedding $E_{\setminus i} = \mathcal{E}(\hat{\vecx}^*_{\setminus i})$.
We define the importance score $\Delta_i$ as the \emph{change} in distance to the target induced by ablating $w_i$:
\begin{equation*}
    \Delta_i = \InNorms{E_{\setminus i} - E_{\text{tgt}}}_2 - \InNorms{E_{\text{base}} - E_{\text{tgt}}}_2, \quad i \in [L]
\end{equation*}
A large positive $\Delta_i$ means masking $w_i$ moves the sequence away from the target, indicating that $w_i$ is critical to alignment. A near-zero $\Delta_i$ means $w_i$ is irrelevant to alignment, and a negative $\Delta_i$ means masking $w_i$ actually \emph{improves} alignment, so the token is beneficial to mask.
In accordance with the diffusion noise schedule, we retain the tokens with the largest $\Delta_i$ and re-mask the remaining tokens.
Specifically, at step $t$, when the diffusion noise schedule calls for masking $\rho_t$ tokens, we first re-mask the $\rho_t/2$ tokens with the smallest $\Delta_i$ and then re-mask an additional $\rho_t/2$ low-probability tokens.
This simultaneously ensures semantic alignment and effective denoising.

By iterating these semantic-aware sampling and re-masking steps, we ensure that the generative process converges to a coherent sentence where the corresponding semantic embedding is close to the target reference embedding vector $E_{\text{tgt}}$.

\noindent\textbf{Necessity of Both Components.}~Semantic-aware sampling and re-masking play complementary roles: sampling steers candidate selection toward the target manifold, while re-masking preserves semantically critical tokens.
We verify that disabling either component substantially degrades both semantic alignment (\S~\ref{subsec:guided-eval}) and downstream DP utility (\S~\ref{subsec:dp-eval}), confirming both stages are essential.

\subsection{Semantic Alignment and Quality Preservation of Guided Diffusion}
\label{subsec:guided-eval}

To demonstrate that Guided Diffusion can effectively generate high-quality sentences while closely aligning with a target semantic embedding, we conduct a comprehensive evaluation on standard benchmarks.

\noindent\textbf{Experimental Setup.}~We use two datasets covering topical and sentiment-based steering: \texttt{Yahoo Answers Topics}~\citep{zhang2015character} (10 topic categories) and \texttt{IMDB}~\citep{imdbdataset} (rich-sentiment movie reviews); from each we randomly draw $1{,}000$ reference sentences exceeding $256$ tokens as semantic targets. We employ \texttt{LLaDA-8B-Instruct}~\citep{nie2025large} as the backbone DLM, with target embeddings produced by \texttt{sentence-t5-base}~\citep{ni2022sentence} for \texttt{Yahoo} and \texttt{stsb-roberta-base-v2}~\citep{stsbrobert} for \texttt{IMDB}. We report (i) \textit{semantic alignment} as cosine similarity between generated and target embeddings, and (ii) \textit{generation quality} as perplexity (PPL) computed with Qwen3-8B~\citep{yang2025qwen3}. We further include an ablation isolating the contribution of semantic-aware sampling and re-masking.

\begin{wrapfigure}{r}{0.58\linewidth}
    \centering
    \includegraphics[width=\linewidth]{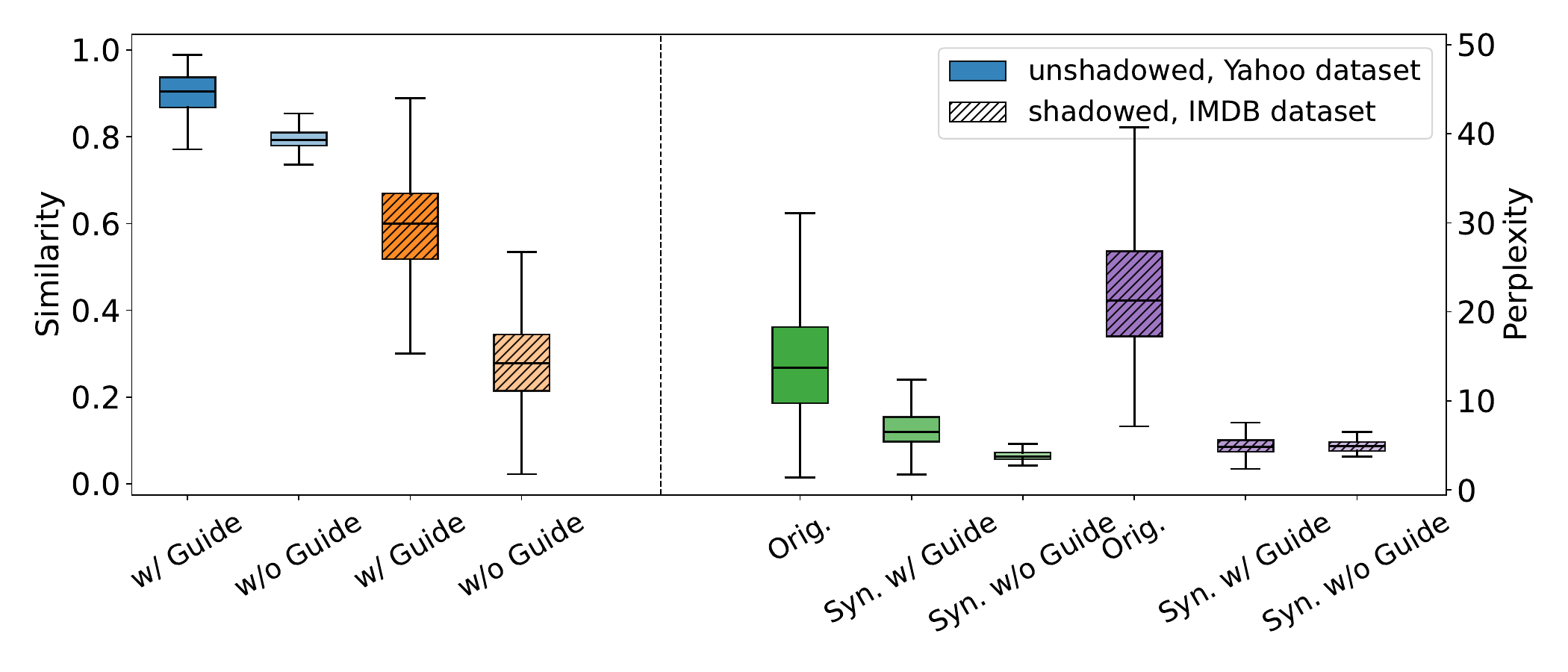}
    \vspace{-18pt}
    \caption{Evaluation of semantic alignment and generation quality. \textbf{Left:} Cosine similarity between generated and target embeddings. Guided Diffusion significantly improves alignment compared to unguided generation. \textbf{Right:} Perplexity (PPL) under Qwen3-8B of the original references (Orig.) and of guided and unguided generations. Guided text remains fluent: on \texttt{IMDB} its PPL matches unguided generation, and on \texttt{Yahoo} it is higher than unguided generation but still well below the references.}
    \label{fig:dis-ppl}
    \vspace{-14pt}
\end{wrapfigure}

\noindent\textbf{Evaluation Results.}~The results are presented in \F~\ref{fig:dis-ppl} and \T~\ref{tab:ablation-semantic}. Across both datasets, Guided Diffusion achieves strong semantic alignment with the target embeddings without sacrificing much on fluency, and both semantic-aware sampling and re-masking are necessary to reach this quality.

\textbf{\textit{Semantic Alignment.}}~\F~\ref{fig:dis-ppl} (Left) shows that our Guided Diffusion achieves high fidelity to the target embeddings.
The median cosine similarity between synthetic and original embeddings is $0.91$ for \texttt{Yahoo} and $0.60$ for \texttt{IMDB}.
In contrast, unguided generation results in significantly lower semantic alignment (e.g., a median of $0.28$ for \texttt{IMDB}), confirming that our Guided Diffusion can effectively preserve the high fidelity to the target embedding.

\textbf{\textit{Generation Quality.}}~In \F~\ref{fig:dis-ppl} (Right), we demonstrate that Guided Diffusion preserves the linguistic fluency.
The synthetic samples remain fluent, with median PPL scores of $6.51$ (\texttt{Yahoo}) and $4.83$ (\texttt{IMDB}).
On \texttt{IMDB} this matches unguided generation ($4.89$); on \texttt{Yahoo} guided text has a higher PPL than unguided text, since steering toward a specific target constrains the sampler, but both remain well below the PPL of the human references. Our sampling and re-masking interventions therefore preserve fluency.

\begin{wraptable}{r}{0.62\linewidth}
\vspace{-12pt}
\caption{Ablation of semantic-aware sampling and re-masking. We report the cosine similarity between generated and target embeddings on \texttt{Yahoo} and \texttt{IMDB} as mean$_{\pm\text{std}}$; bold marks the best result. Disabling either substantially degrades alignment, confirming both stages are essential.}
\label{tab:ablation-semantic}
\centering
\vspace{-4pt}
\setlength{\tabcolsep}{4pt}
\begin{tabular}{lcc}
    \toprule
    \textbf{Method} & \texttt{Yahoo} & \texttt{IMDB} \\
    \midrule
    \textbf{Guided Diffusion (Full)} & \textbf{0.90}$_{\pm 0.04}$ & \textbf{0.59}$_{\pm 0.11}$ \\
    w/o semantic-aware re-masking & 0.79$_{\pm 0.02}$ & 0.35$_{\pm 0.10}$ \\
    w/o semantic-aware sampling & 0.80$_{\pm 0.02}$ & 0.34$_{\pm 0.10}$ \\
    \bottomrule
\end{tabular}
\vspace{-10pt}
\end{wraptable}

\textbf{\textit{Ablation Study.}}~To isolate the contribution of each guidance mechanism, we ablate semantic-aware sampling and semantic-aware re-masking on \texttt{Yahoo} and \texttt{IMDB}. When disabling one component, we replace it with its standard unguided counterpart (independent token sampling or low-confidence re-masking) while leaving the other component active, and report the mean cosine similarity over $1{,}000$ samples in \T~\ref{tab:ablation-semantic}.

The full framework achieves the best alignment on both datasets, and disabling either causes a sharp drop. The effect is most pronounced on \texttt{IMDB}, where the partial variants ($0.35$ and $0.34$) collapse toward the fully unguided baseline (see \F~\ref{fig:dis-ppl}). The two components yield comparable scores in isolation (e.g., $0.79$ vs.\ $0.80$ on \texttt{Yahoo}), indicating that they capture complementary signals: semantic-aware sampling steers the candidate selection toward the target manifold, while semantic-aware re-masking preserves semantically critical tokens across diffusion steps. Combining both restores the full alignment quality, confirming that the two stages are essential and reinforcing.

\section{Differentially Private Synthetic Data}
\label{sec:dpsyn}

In this section, we present our framework for generating differentially private (DP) synthetic text.
The DP synthetic data generation pipeline is illustrated in \F~\ref{fig:framework}.

The fundamental challenge in applying DP to natural language is that text is discrete and symbolic;
unlike numerical data, one cannot simply compute the ``average'' of multiple sentences and add DP noise to obfuscate individual contributions.
Consequently, prior works have been forced to rely on complex and unstable methods like DP-SGD~\citep{yu2022differentially} or use an LLM to synthesize a huge corpus of synthetic data and select according to private samples iteratively~\citep{pe2024}.

The utility of the latter approach, Private Evolution~\citep[PE;][]{pe2024}, is limited mainly by its generation pipeline rather than by the DP mechanism.
If DP noise were the main limitation, removing it would raise utility substantially.
Instead, on \texttt{PubMed}, PE improves only from $27.90\%$ at $\varepsilon=1$ to $28.60\%$ without DP noise ($\varepsilon=\infty$; \T~\ref{tab:sens-eps}).
PE's own results show the same pattern: moving from $\varepsilon=1$ to $\varepsilon=\infty$ changes most metrics only modestly, leaves some unchanged or lower, and does not close the gap to training on the original data (Table~2 in~\citet{pe2024}).
A weaker privacy guarantee therefore does not remove PE's proposal-and-selection bottleneck.
This pipeline is also costly: PE generates a large pool of candidate texts in every iteration and then selects among them, so with the same backbone a \emph{single PE iteration} already takes $1.53\times$ as long as Guided Diffusion (\S~\ref{subsec:dp-eval}).
These limitations motivate a generation paradigm that is both more efficient and achieves higher utility for DP synthetic text.

Our framework circumvents these limitations by shifting the privacy mechanism to the semantic embedding space.
While embeddings allow for straightforward DP aggregation (e.g., computing noisy centroids), converting these abstract vectors back into coherent text has historically been an ill-posed inverse problem.
Our key insight is that our Guided Diffusion algorithm (\A s~\ref{alg:sampling} and~\ref{alg:remasking}) effectively solves this ``inversion gap,'' acting as a robust decoder that reconstructs high-fidelity text conditioned on the DP embeddings.
This enables a streamlined ``embed-privatize-generate'' workflow that was previously infeasible.

\subsection{Framework for DP Synthetic Text Generation with Guided Diffusion}
As depicted in \F~\ref{fig:framework}, our framework contains three distinct stages.
We assume that the base DLM and the text embedding model contain no information about the private samples, i.e., they are not trained on the private samples, and that all pipeline hyperparameters (the number of clusters, the per-class sample budget, the generation hyperparameters, and the choice of encoder) are fixed before the private data are accessed. To ensure that the first assumption holds in our evaluation, we construct \texttt{ICLR26} from reviews published after these models were released (\S~\ref{subsec:dp-eval}); for \texttt{PubMed}, whose abstracts predate the DLMs, the formal guarantee is unaffected, but the empirical utility may partly reflect pretraining exposure.
We now introduce the three stages of DP synthetic text generation with Guided Diffusion:

\noindent\textbf{\textit{Semantic Embedding:}}~We project the sensitive private dataset $\mathcal{D} = \{x_1, \dots, x_N\}$ into a semantic space using a pre-trained text embedding model $\mathcal{E}$. This yields a set of continuous embedding vectors $\{E_1, \dots, E_N\}$.
These vectors are normalized with $L_2$ norm equal to $1$, simplifying the sensitivity bound of the aggregated embedding vectors.

\noindent\textbf{\textit{DP Aggregation:}}~We apply a DP aggregation mechanism to these vectors.
To avoid collapsing the diverse semantic information of the dataset into a single mean, we employ the practical DP clustering algorithm~\citep{chang2021differentially}.
This algorithm partitions the high-dimensional embeddings into semantically coherent groups and computes a set of representative noisy centroids $E_{dp}$ satisfying $(\varepsilon, \delta)$-DP.
For \emph{labeled} corpora (e.g., \texttt{ICLR26} in \S~\ref{subsec:dp-eval}, where each review carries a score), DP clustering is run \emph{class-conditionally}: separately within each label class on its embedding subset, yielding per-class noisy centroids $\{E_{dp,c}\}$. Each synthetic sample inherits the class label of its source per-class centroid, and the per-class sample budget is fixed in advance as a public hyperparameter. Since the per-class subsets are disjoint, the joint release of all centroids is $(\varepsilon, \delta)$-DP under add/remove adjacency by parallel composition (\S~\ref{sec:preliminaries}).
These centroids effectively capture the distribution of the private text while ensuring privacy protection for individual samples.
The only data-dependent objects that leave the private environment are the noisy centroids returned by the DP clustering algorithm (and, if the algorithm returns a data-dependent number of centroids, that number, which is part of its DP output). The requested number of clusters $K$ is a public constant, the number of synthetic samples generated per centroid is fixed uniformly in advance, and raw cluster sizes and cluster assignments are neither released nor used by the decoder.

\noindent\textbf{\textit{Guided Diffusion Generation:}}~Finally, we utilize the DP embedding $E_{dp}$ as the guidance signal ($E_{\text{tgt}}$) for our Guided Diffusion algorithm with DLM $\mathcal{M}_\theta$.
The model iteratively denoises a random sequence, steered by our semantic-aware sampling and re-masking strategies (\S~\ref{sec:guided-diffusion-design}), to generate an arbitrarily large set of synthetic texts $\mathcal{Y}$ that are semantically aligned with $E_{dp}$.
The original private samples are rigorously protected with $(\varepsilon, \delta)$-DP guarantees.

\begin{figure}[!t]
    \centering
\includegraphics[width=1.0\linewidth]{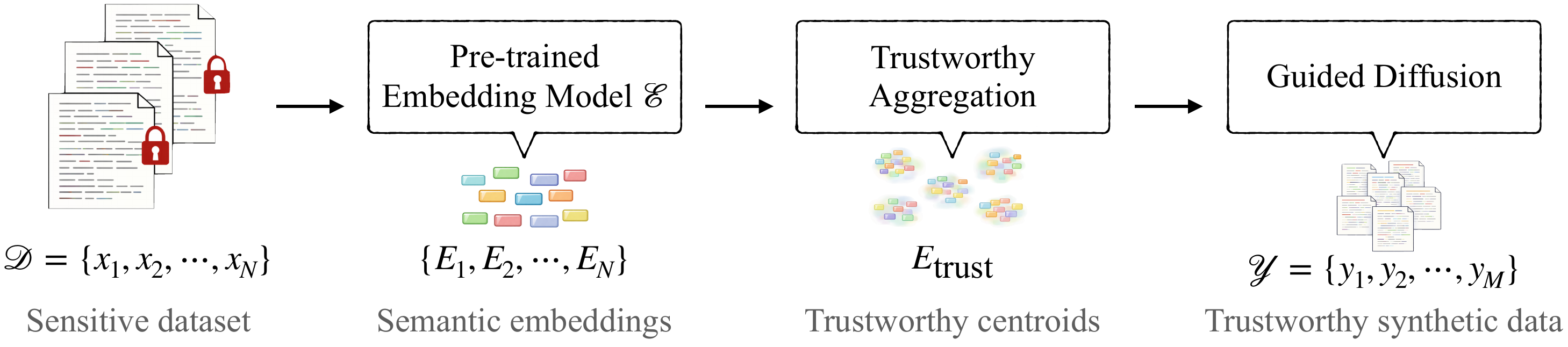}
    \caption{Trustworthy synthetic data generation with Guided Diffusion. For DP, the input is a private dataset, and the trustworthy aggregation is $(\varepsilon, \delta)$-DP aggregation. For DSI (\S~\ref{sec:dsisyn}), the input is a dataset collected from multiple, potentially malicious sources, and the trustworthy aggregation is DSI aggregation.}
    \label{fig:framework}
\end{figure}

\noindent\textbf{Privacy Guarantee.}~Under standard conditions of public pretrained $\mathcal{E}$ and $\mathcal{M}_\theta$, class-conditional DP aggregation via the DP clustering algorithm~\citep{chang2021differentially}, hyperparameters fixed before accessing $\mathcal{D}$, and a fixed generation procedure that does not access $\mathcal{D}$, the released synthetic dataset $\mathcal{Y}$ is $(\varepsilon, \delta)$-DP with respect to $\mathcal{D}$ under add/remove adjacency by post-processing invariance. The formal statement (Proposition~\ref{prop:dp-guarantee}) and proof are in Appendix~\ref{sec:appendix-proofs}.

\subsection{Evaluation: DP-Compliant Synthetic Text Generation}
\label{subsec:dp-eval}
To demonstrate the efficacy of our DP synthetic text paradigm with Guided Diffusion, we generate synthetic text under $(\varepsilon, \delta)$-DP guarantees and evaluate its downstream utility and its distributional fidelity to the private data.

\begin{table}[!t]
\caption{DP synthetic text generation at $(\varepsilon,\delta)=(1,10^{-6})$ with $2{,}000$ synthetic samples, reported as mean$_{\pm\text{std}}$ over four seeds. Utility is macro-F1 on \texttt{ICLR26} and causal next-token accuracy on \texttt{PubMed}. Bold marks the best result per dataset and backbone.}
\vspace{-5pt}
\label{tab:dp-main}
\centering
\small
\setlength{\tabcolsep}{3.5pt}
\begin{tabular}{llccccc}
    \toprule
    & & & \multicolumn{4}{c}{\textbf{Distributional fidelity}} \\
    \cmidrule(lr){4-7}
    \textbf{Backbone} & \textbf{Method} & \textbf{Utility} (\%) $\uparrow$ & FD $\downarrow$ & JSD $\downarrow$ & Dist-2 $\uparrow$ & MAUVE $\uparrow$ \\
    \midrule
    \multicolumn{7}{l}{\textit{\texttt{ICLR26} (utility: macro-F1)}} \\
    \multirow{3}{*}{\shortstack[l]{LLaDA-8B-\\Instruct}} & PE & 13.34$_{\pm 1.38}$ & 0.7753$_{\pm 0.0032}$ & 0.3401$_{\pm 0.0006}$ & 0.0730 & 0.6366$_{\pm 0.0054}$ \\
    & DP-SGD+LoRA & 20.12$_{\pm 4.16}$ & 0.6026$_{\pm 0.0015}$ & 0.3523$_{\pm 0.0007}$ & 0.0431 & 0.6475$_{\pm 0.0031}$ \\
    & \textbf{Guided Diffusion} & \textbf{24.85}$_{\pm 2.80}$ & \textbf{0.3995}$_{\pm 0.0020}$ & \textbf{0.2787}$_{\pm 0.0008}$ & \textbf{0.0891} & \textbf{0.6923}$_{\pm 0.0042}$ \\
    \cmidrule(lr){1-7}
    \multirow{3}{*}{Dream-7B} & PE & 12.49$_{\pm 0.57}$ & 0.7531$_{\pm 0.0041}$ & 0.3060$_{\pm 0.0005}$ & 0.0785 & 0.6409$_{\pm 0.0079}$ \\
    & DP-SGD+LoRA & \textbf{29.23}$_{\pm 3.37}$ & 0.6848$_{\pm 0.0020}$ & 0.3278$_{\pm 0.0006}$ & 0.0398 & 0.6281$_{\pm 0.0012}$ \\
    & \textbf{Guided Diffusion} & 28.68$_{\pm 2.40}$ & \textbf{0.4333}$_{\pm 0.0013}$ & \textbf{0.2620}$_{\pm 0.0006}$ & \textbf{0.0984} & \textbf{0.6848}$_{\pm 0.0045}$ \\
    \midrule
    \multicolumn{7}{l}{\textit{\texttt{PubMed} (utility: accuracy)}} \\
    \multirow{3}{*}{\shortstack[l]{LLaDA-8B-\\Instruct}} & PE & 22.15$_{\pm 0.08}$ & 0.5464$_{\pm 0.0039}$ & 0.3388$_{\pm 0.0005}$ & 0.1221 & 0.7318$_{\pm 0.0040}$ \\
    & DP-SGD+LoRA & 25.45$_{\pm 0.11}$ & 0.9021$_{\pm 0.0070}$ & 0.3432$_{\pm 0.0011}$ & 0.0786 & 0.6779$_{\pm 0.0021}$ \\
    & \textbf{Guided Diffusion} & \textbf{32.68}$_{\pm 0.08}$ & \textbf{0.3169}$_{\pm 0.0057}$ & \textbf{0.2061}$_{\pm 0.0007}$ & \textbf{0.2737} & \textbf{0.8637}$_{\pm 0.0017}$ \\
    \cmidrule(lr){1-7}
    \multirow{3}{*}{Dream-7B} & PE & 20.46$_{\pm 0.18}$ & 0.5801$_{\pm 0.0037}$ & 0.3144$_{\pm 0.0003}$ & 0.1282 & 0.7319$_{\pm 0.0071}$ \\
    & DP-SGD+LoRA & 25.16$_{\pm 0.14}$ & 0.9445$_{\pm 0.0052}$ & 0.3287$_{\pm 0.0008}$ & 0.0723 & 0.6804$_{\pm 0.0017}$ \\
    & \textbf{Guided Diffusion} & \textbf{32.89}$_{\pm 0.08}$ & \textbf{0.2387}$_{\pm 0.0039}$ & \textbf{0.1569}$_{\pm 0.0004}$ & \textbf{0.3014} & \textbf{0.8764}$_{\pm 0.0060}$ \\
    \bottomrule
\end{tabular}
\end{table}

\noindent\textbf{Experimental Setup.}~To isolate the generation algorithm from the capability of the underlying generator, we run every method on the same DLM backbone, either of two widely used DLMs, \texttt{LLaDA-8B-Instruct}~\citep{nie2025large} or \texttt{Dream-7B}~\citep{ye2025dream}, and compare against PE~\citep{pe2024} and \texttt{DP-SGD}+\texttt{LoRA}~\citep{li2024llmpbe}.
That is, PE and \texttt{DP-SGD}+\texttt{LoRA} also use the same DLM to generate synthetic text.
For each backbone, all of the methods use the same private data, privacy parameters, synthetic-data size, prompts, and downstream evaluation.

We evaluate on two private datasets. \texttt{PubMed}~\citep{yu2023training} contains $75{,}316$ medical abstracts; following the official PE protocol, we fine-tune \texttt{BERT-small}~\citep{turc2019well} on the synthetic DP samples and report causal next-token prediction accuracy on real held-out abstracts in \texttt{PubMed}'s test split.
To rule out the possibility that the training data of the evaluated models, including the DLMs and the embedding models, contains the private samples, we construct a new dataset, \texttt{ICLR26}, of paper reviews from ICLR 2026, which were released after all evaluated models.
We follow the setup and train/test split of the ICLR 2023 review dataset used by PE~\citep{pe2024}.
Specifically, we train a \texttt{roberta-base}~\citep{liu2019roberta} classifier on the synthetic reviews to predict the five-class review score and report macro-F1 on real held-out reviews in the test split.
Consistent with PE, we use the embedding model \texttt{sentence-t5-base} for \texttt{PubMed} and \texttt{stsb-roberta-base-v2} for \texttt{ICLR26}.
We run PE for four proposal-and-selection iterations, which already takes more than $4\times$ computational cost compared with Guided Diffusion.

Every method generates $2{,}000$ synthetic samples at $(\varepsilon,\delta)=(1,10^{-6})$, and results are averaged over four seeds.
Beyond downstream utility, we also measure distributional fidelity to the real data with four metrics: the Fr\'echet distance (FD) between MPNet~\citep{song2020mpnet} embeddings of synthetic and real texts, the Jensen--Shannon divergence (JSD) between their unigram distributions, Distinct-2~\citep{li2016diversity} for lexical diversity, and MAUVE~\citep{pillutla2021mauve}, which compares the two distributions in a language-model embedding space and penalizes both synthetic texts that are unlike real ones and real texts that the synthetic set fails to cover.

\noindent\textbf{Evaluation Results.}~In \T~\ref{tab:dp-main}, Guided Diffusion achieves the best score on every fidelity metric in all four dataset--backbone settings and the best utility in three of them; on \texttt{ICLR26} with \texttt{Dream-7B}, its macro-F1 is comparable with \texttt{DP-SGD}+\texttt{LoRA} (within standard deviation).

\textbf{\textit{Downstream Utility.}}~On \texttt{ICLR26} with \texttt{LLaDA-8B-Instruct}, Guided Diffusion reaches $24.85\%$ macro-F1, compared with $13.34\%$ for PE and $20.12\%$ for \texttt{DP-SGD}+\texttt{LoRA}. With \texttt{Dream-7B}, its macro-F1 ($28.68\%$) is $16.19$ points above PE and $0.55$ points below \texttt{DP-SGD}+\texttt{LoRA}. PE's macro-F1 stays near $13\%$ with both backbones.
On \texttt{PubMed}, Guided Diffusion improves causal next-token accuracy over PE by $10.53$ points with \texttt{LLaDA-8B-Instruct} ($32.68\%$ vs.\ $22.15\%$) and by $12.43$ points with \texttt{Dream-7B} ($32.89\%$ vs.\ $20.46\%$), and over \texttt{DP-SGD}+\texttt{LoRA} by more than $7$ points with both backbones.
The performance of Guided Diffusion also exceeds the results of PE's official implementation, where PE and \texttt{DP-SGD-FT}~\citep{yue2023synthetic} on \texttt{GPT-2 Large} use their default generators ($27.90\%$ and $31.10\%$; \T~\ref{tab:sens-eps}).
Our result approaches the non-private upper bound obtained by fine-tuning on the real private corpus ($34.60\%$).

\textbf{\textit{Distributional Fidelity.}}~Guided Diffusion produces synthetic text that is closer to the private data. It attains the lowest FD and unigram JSD and the highest Distinct-2 and MAUVE in every setting. The largest gap is in FD: with \texttt{LLaDA-8B-Instruct}, Guided Diffusion reduces FD from $0.78$ (PE) and $0.60$ (\texttt{DP-SGD}+\texttt{LoRA}) to $0.40$ on \texttt{ICLR26}, and from $0.55$ and $0.90$ to $0.32$ on \texttt{PubMed}. The higher Distinct-2 shows that this fidelity does not come from collapsing to a few prototypical texts.

\textbf{\textit{Computational Efficiency.}}~On eight A100 GPUs, Guided Diffusion generates the $2{,}000$ \texttt{ICLR26} samples in $2.74$ hours. End-to-end PE with the same backbone takes $4.19$ hours for \emph{a single} private-selection iteration and $12.20$ hours for the four iterations used in \T~\ref{tab:dp-main}, so Guided Diffusion is $4.45\times$ faster.
The difference is structural: PE repeatedly queries the generator for candidate variants and selects among them with a DP histogram, whereas our framework aligns each sample within a single denoising run.
In our implementation, DLM forward passes account for $47.6\%$ of the runtime, encoder scoring for $29.0\%$, and candidate construction for $17.0\%$; Appendix~\ref{sec:appendix-runtime} gives the full profile, including batch size, memory, and encoder-call counts.
This comparison holds for a shared DLM backbone; PE with an autoregressive generator that supports KV caching has a different cost profile, which we have not measured.
We do not include DP-SGD-based methods here, since their cost is dominated by private fine-tuning with per-sample gradient clipping and PE reports DP fine-tuning to be more than $10\times$ slower than PE~\citep{pe2024}.

\begin{wraptable}{r}{0.46\linewidth}
\vspace{-12pt}
\caption{Ablation of semantic-aware sampling and re-masking on \texttt{PubMed} with \texttt{LLaDA-8B-Instruct} under $\varepsilon=1$.}
\vspace{-5pt}
\label{tab:dp-ablation}
\centering
\small
\setlength{\tabcolsep}{3pt}
\begin{tabular}{lc}
    \toprule
    \textbf{Method} & \textbf{Accuracy (\%)} \\
    \midrule
    \textbf{Guided Diffusion (Full)} & \textbf{32.68} \\
    w/o semantic-aware sampling & 18.01 \\
    w/o semantic-aware re-masking & 26.73 \\
    \bottomrule
\end{tabular}
\vspace{-10pt}
\end{wraptable}

\textbf{\textit{Hyperparameter Sensitivity.}}~Guided Diffusion is insensitive to the privacy budget: on \texttt{PubMed} with \texttt{LLaDA-8B-Instruct}, its accuracy is $32.68\%$ / $32.65\%$ / $32.48\%$ at $\varepsilon = 1$ / $2$ / $4$, so the strict budget $\varepsilon=1$ costs little utility, and it stays above every DP baseline at each budget (\T~\ref{tab:sens-eps}). Generation quality is insensitive to the candidate pool size $\alpha$ and the weight ratio $b/a$ in \A~\ref{alg:sampling}.
See Appendix~\ref{sec:appendix-sensitivity} for more results.

\textbf{\textit{Ablation Study.}}~To verify that the gains observed in \S~\ref{sec:guided-diffusion-design} carry over to the downstream settings, we ablate each guidance mechanism on \texttt{PubMed} with \texttt{LLaDA-8B-Instruct} while keeping the same DP embedding pipeline. As shown in \T~\ref{tab:dp-ablation}, removing either component substantially degrades downstream accuracy: without semantic-aware sampling, accuracy collapses from $32.68\%$ to $18.01\%$, and without semantic-aware re-masking it drops to $26.73\%$. This confirms that both components remain essential in DP synthetic text generation.

\section{Trustworthy Synthetic Text via DSI}
\label{sec:dsisyn}

Beyond DP, we extend our Guided Diffusion framework to incorporate Data-Specific Indistinguishability~\citep[DSI;][]{xiao2025trustworthy}, providing complementary trustworthiness guarantees.
The overall pipeline is consistent with the DP synthetic data generation framework in \F~\ref{fig:framework}, with DP aggregation replaced by DSI-based aggregation.

DSI is designed to mitigate specific data risks with rigorous protection guarantees, including backdoor attacks, copyright infringement, and unintended memorization.
Unlike DP, which enforces indistinguishability against the presence of an arbitrary single sample under worst-case sensitivity, DSI ensures that the output is statistically indistinguishable from the output produced by a ``safe'' reference dataset.
In our context, this guarantees that synthetic data generated from protected embeddings is statistically close to data generated from references with potentially corrupted contributions removed.
Formally, DSI transfers high-probability trustworthy guarantees from a designated set of safe references to the target output:
\begin{definition}[DSI Gaussian Mechanism]
Given a function $\mathcal{F}$, a target input $U$, and reference datasets $R_{[1:m]} = \{R_1, \dots, R_m\}$, the mechanism $\mathcal{F}(U) + \mathbf{e}$ with $\mathbf{e} \sim \mathcal{N}(\bm{0}, \Sigma)$ satisfies $(\gamma_{[1:m]})$-DSI if, for a chosen divergence $\mathcal{D}_f$,
\begin{equation*}
\mathcal{D}_f\!\left(\mathbb{P}_{\mathcal{F}(U)+\mathbf{e}} \,\|\, \mathbb{P}_{\mathcal{F}(R_i)+\mathbf{e}}\right) \leq \gamma_i, \quad \forall i \in [1{:}m].
\end{equation*}
When $\mathcal{D}_f$ is the R\'enyi divergence of order $\alpha$, $\gamma_i$ recovers the $\varepsilon$ of $(\alpha, \varepsilon)$-R\'enyi DP per reference.
\end{definition}
DSI permits anisotropic $\Sigma$ optimized for the geometry of $(U, R_{[1:m]})$, which our framework exploits to construct high-utility trustworthy aggregations. Detailed motivation, the safe-reference construction, and the convex optimization for $\Sigma$ are deferred to Appendix~\ref{sec:appendix-dsi-background}.

\subsection{Framework for Trustworthy Synthetic Text Generation with Guided Diffusion}

To deploy DSI for trustworthy synthetic data generation, we consider a distributed data collection paradigm where the corpus is collected from multiple sources.
This setup mimics collaborative environments where data is aggregated from diverse providers, some of whom may be untrusted.

\noindent\textbf{\textit{Semantic Embedding:}}~Similar to the DP pipeline, we first project the collected text into a continuous semantic embedding space using a pre-trained semantic encoder, and then partition the embeddings into distinct clusters (e.g., via K-Means).
The number of clusters may be set to one, in which case aggregation reduces to a single average embedding over the collected samples.
After clustering, DSI-protected groups are defined locally within each cluster/class partition: for cluster $r$ and source $i$, $G_{r,i}=C_r\cap S_i$ denotes source $i$'s records inside cluster $r$; for labeled corpora, the same definition is applied within each class.
These groups are disjoint, so removing one group changes only the corresponding cluster-level DSI release; the joint synthetic release therefore preserves the same DSI bound by parallel composition and post-processing.

\noindent\textbf{\textit{DSI Aggregation:}}~Within each cluster, we define reference sets by removing groups.
For leave-$k$-groups-out DSI, each reference excludes any $k$ protected groups from the cluster.
We then compute the aggregated cluster centroid using the \textit{Optimal DSI Gaussian Mechanism}, which injects calibrated noise so that the released centroid is statistically indistinguishable from one computed from the corresponding safe reference, bounding the influence of corrupted groups on the aggregation.

\noindent\textbf{\textit{Guided Diffusion Generation:}}~Finally, these DSI-sanitized embeddings serve as the conditioning targets for Guided Diffusion.
The model generates synthetic text that preserves the semantic utility of the original cluster while inheriting the DSI indistinguishability guarantee through post-processing.

Our DSI Guided Diffusion framework is applicable to a broad spectrum of trustworthiness risks, including copyright protection and memorization mitigation as studied in the original DSI formulation~\citep{xiao2025trustworthy}.
Empirical evaluations against \textit{backdoor attacks} (data poisoning) and \textit{unintended memorization}~\citep{carlini2021extracting} are reported in Appendix~\ref{sec:appendix-dsi-backdoor} and Appendix~\ref{sec:appendix-memorization}, respectively.
In both cases, the pipeline without DSI already shows no observed attack at the tested scale ($0$ of $60$ triggered tests succeed; max ROUGE-L stays below the $0.2$ memorization threshold), and DSI Guided Diffusion preserves these outcomes at $\varepsilon=1$ and $\varepsilon=2$, respectively, with a small utility cost.
DSI therefore adds a formal certificate rather than an observed improvement: finite-sample tests cannot exclude untested triggers or rarer memorization events, whereas the DSI bound transfers the high-probability properties of each safe reference to the actual release by post-processing.

\noindent\textbf{Trustworthy Guarantee.}~Under analogous conditions of public pretrained $\mathcal{E}$ and $\mathcal{M}_\theta$, class/cluster-conditional DSI aggregation via the Optimal DSI Gaussian Mechanism, and a fixed generation procedure that does not access the collected corpus, the synthetic dataset $\mathcal{Y}$ remains $\gamma_{c,r,i}$-indistinguishable from the leave-one-protected-group-out reference $\mathcal{F}(R_{c,r,i})$ for every protected group $G_{c,r,i}$.
The guarantee transfers from the safe references to $\mathcal{Y}$ by DSI post-processing.
The formal statement (Proposition~\ref{prop:dsi-guarantee}) and proof sketch are in Appendix~\ref{sec:appendix-proofs}.

\section{Related Work}
\label{sec:relatedwork}
Prior DP synthetic text generation falls into two streams: \textit{DP-SGD-based fine-tuning}~\citep{yoon2018pategan, yu2021differentially, yue2023synthetic} and \textit{Private Evolution (PE) variants}~\citep{lin2024differentially, pe2024} that filter LLM candidates via DP histograms; both inherit structural limits from per-sample clipping or rejection sampling. The closest concurrent work, \citet{zhou2025private}, reconstructs DP-protected \emph{images} from privatized embeddings via diffusion-based image editing~\citep{rombach2022high}; translating this ``embedding-to-data'' approach to text is non-trivial because text is discrete and no off-the-shelf guided text-diffusion pipeline exists. Our framework bridges this gap via DLMs and accommodates both DP and DSI~\citep{xiao2025trustworthy}. A detailed elaboration is given in Appendix~\ref{sec:appendix-related-work}.

\section{Conclusion \& Limitations}
\label{sec:conclusion}

We propose a novel framework that converts semantic embeddings into high-fidelity text via Guided Diffusion, establishing a new paradigm for private synthetic text generation.
Our approach naturally integrates Differential Privacy, ensuring precise semantic control while providing rigorous privacy guarantees with little loss in utility.
The same framework extends to Data-Specific Indistinguishability, bounding the influence of corrupted sources and individual samples at a small utility cost.
Our empirical results demonstrate the superiority of Guided Diffusion over prior DP synthetic text generation baselines across datasets and diffusion backbones.

\noindent\textbf{Limitations and Future Work.}~Several extensions remain open. Semantic-aware re-masking evaluates one ablated embedding per token; sparse or batched ablations could amortize this cost for very long sequences. Our generation pipeline operates on fixed-length token blocks; extending to longer documents fits naturally with hierarchical or sliding-window schemes that we leave for future work. Empirically, we focus on English short-form generation; transferring to multilingual corpora, code, and structured outputs is a promising direction.

\subsection*{AI Use Statement}
We used generative AI tools to assist with the writing of this paper, mainly to check grammar and to polish the wording and readability of author-written text.
The tools were also used to format \LaTeX{} tables and figures.
All research ideas, the Guided Diffusion method and its technical components, the privacy analysis, the experimental design and implementation, and the interpretation of results were developed by the authors.
We did not use generative AI to generate datasets or to develop the method or the conceptual framework.
The authors reviewed and revised all AI-polished text.
The authors take responsibility for all the final content of this work.

\subsection*{Ethics Statement}
This work aims to reduce the privacy risks in synthetic text generation.
The DP and DSI guarantees hold under the assumptions stated in Propositions~\ref{prop:dp-guarantee} and~\ref{prop:dsi-guarantee}; practitioners should verify these assumptions before deployment.
DP bounds the influence of each individual record but does not prevent inferences about population-level properties of the private corpus.
Our experiments use publicly available data: \texttt{PubMed} abstracts, \texttt{Yahoo Answers Topics}, \texttt{IMDB} and \texttt{Yelp} reviews, and ICLR 2023 and ICLR 2026 paper reviews publicly posted on OpenReview.
We use the reviews only to evaluate synthetic data generation and make no attempt to identify reviewers.
Our study involves no human-subject experiments.
We discuss broader impacts in Appendix~\ref{sec:appendix-broader-impact}.

\subsection*{Reproducibility Statement}
We describe Guided Diffusion in \S~\ref{sec:guided-diffusion-design}, with pseudocode in \A s~\ref{alg:sampling} and~\ref{alg:remasking}, the DP synthetic text framework in \S~\ref{sec:dpsyn}, and the DSI instantiation in \S~\ref{sec:dsisyn}, with its experiments in Appendices~\ref{sec:appendix-dsi-backdoor} and~\ref{sec:appendix-memorization}.
Appendix~\ref{sec:implementation} lists all hyperparameters, the backbone and embedding models, and the DP clustering configuration, and Appendix~\ref{sec:appendix-sensitivity} reports the sensitivity to the privacy budget, the number of synthetic samples, and the generation hyperparameters.
All backbone DLMs, embedding models, and baseline implementations are publicly available, and \S~\ref{subsec:dp-eval} describes the evaluation datasets and protocols; \texttt{ICLR26} follows the setup and train/test split of the ICLR 2023 review dataset used by PE~\citep{pe2024}.
We will release our code publicly.

\bibliography{main}
\bibliographystyle{iclr2027_conference}

\newpage
\appendix
\section{Broader Impact}
\label{sec:appendix-broader-impact}
This paper presents work whose goal is to advance the field of privacy-preserving machine learning. Our framework for private synthetic text generation has significant potential to benefit society by enabling the safe utilization of sensitive datasets, such as medical records, financial reports, and private communications, that are currently inaccessible due to privacy regulations (e.g., GDPR). By reconciling high-fidelity generation with rigorous DP guarantees, our approach mitigates critical risks associated with Large Language Models, including data memorization and leakage from training corpora.

While advancements in generative AI carry inherent risks of misuse, such as the generation of misinformation or impersonating content, our research focuses specifically on defensive mechanisms that certify safety. The DP guarantee bounds the influence of any single record on the released synthetic data, providing a quantifiable foundation for responsible release of synthetic corpora derived from sensitive sources. We hope this work contributes to the responsible and transparent deployment of AI systems in sensitive domains.

\section{Detailed Related Work}
\label{sec:appendix-related-work}

\noindent\textbf{Guided Diffusion Models.}~Image diffusion models have achieved remarkable success in controllable synthesis, using embedding-based guidance to steer generation~\citep{song2019generative, ho2020denoising, song2021denoising, dhariwal2021diffusion, rombach2022high}.
While extensive research optimizes this guidance for images~\citep{ramesh2022hierarchical, ye2023ip, zhang2023adding, feng2023trainingfree, bansal2024universal}, its application to DLMs remains underexplored.
Unlike the continuous image domain, text is discrete, and the dominance of prompt-based AR LLMs has overshadowed the development of embedding-based control.
To the best of our knowledge, we are the first to effectively guide DLMs in converting semantic embeddings into coherent, high-fidelity text.

\noindent\textbf{Differentially Private Synthetic Data Generation.}~Prior research on generating synthetic data with DP guarantees spans both image and text domains~\citep{hu2024sok}, generally bifurcating into two primary streams: 1) \textit{DP-SGD-based methods}~\citep{yoon2018pategan, yu2021differentially, li2022large, harder2023pretrained, yue2023synthetic, ghalebikesabi2023differentially, dockhorn2023differentially, tan2025synthesizing} and 2) \textit{Private Evolution (PE) variants}~\citep{lin2024differentially, pe2024, pmlr-v235-hou24c, zhang2025pcevolve, lin2025differentially}.

DP-SGD approaches involve training/fine-tuning a generative model. While foundational, these methods suffer from significant computational overhead due to per-sample gradient clipping and often yield suboptimal utility when scaling to LLMs. In contrast, PE variants circumvent direct training by iteratively querying public LLMs to generate a massive pool of synthetic candidates. A DP selection mechanism, which adds Gaussian noise to distance histograms, is then employed to filter for samples that align with the private data distribution. While PE offers improved efficiency and utility-privacy tradeoffs compared to DP-SGD, it remains heavily dependent on the quality of manual sampling prompts and implicitly constrains the diversity and volume of synthetic samples due to the rejection-sampling bottleneck.

Recently, \citet{zhou2025private} introduced a novel paradigm for \textit{image} synthesis, demonstrating that high-fidelity images can be reconstructed from DP embeddings using diffusion-based image editing~\citep{rombach2022high}. However, translating this ``embedding-to-data'' approach to the text domain is non-trivial due to the discrete nature of language and the absence of established Guided Diffusion pipelines for text. Our work bridges this gap: we propose an alternative approach that, for the first time, exploits the steerable properties of DLMs to efficiently convert semantic embeddings into high-utility synthetic text. Crucially, this framework naturally accommodates DP mechanisms, providing rigorous privacy guarantees alongside superior generation quality.

\section{Background on DSI}
\label{sec:appendix-dsi-background}

\noindent\textbf{Motivation for DSI.}~DP is rooted in \emph{Input-Independent Indistinguishability} (III): the indistinguishability of the released leakage must hold for arbitrary pairs of neighboring datasets. III is necessary when the goal is to uniformly upper bound the additional knowledge an adversary gains from observing the leakage, but is overly conservative for trustworthiness concerns whose threat models are not worst-case over all neighbors.

\noindent\textbf{DSI as Transferring Guarantees from Safe References.}~DSI~\citep{xiao2025trustworthy} relaxes III by establishing \emph{specific} indistinguishability between the target output and a designated set of \emph{safe reference} outputs. As a concrete example, consider an input dataset $\mathcal{X}=\{\vecx_1, \vecx_2, \dots, \vecx_n\}$ where each $\vecx_i$ is contributed by source $i$, and suppose $\vecx_{i_0}$ is poisoning data from an unknown malicious source. The leave-one-out subset $\mathcal{X} \setminus \vecx_{i_0}$ is then a safe reference: a model $\mathcal{M}(\mathcal{X} \setminus \vecx_{i_0})$ trained on it should be robust to the backdoor in $\vecx_{i_0}$ with high probability. If we ensure indistinguishability between $\mathcal{M}(\mathcal{X})$ and \emph{each} $\mathcal{M}(\mathcal{X} \setminus \vecx_i)$ for $i=1,\dots,n$ -- which naturally includes the unknown $i_0$ -- then the trustworthy guarantee of $\mathcal{M}(\mathcal{X} \setminus \vecx_{i_0})$ transfers to $\mathcal{M}(\mathcal{X})$. Crucially, only the $n$ specific pairs $(\mathcal{M}(\mathcal{X}), \mathcal{M}(\mathcal{X} \setminus \vecx_i))$ need to be indistinguishable, rather than arbitrary pairs.

\noindent\textbf{Optimal DSI Gaussian Noise.}~Unlike the DP Gaussian mechanism, which adds isotropic noise calibrated to the global sensitivity, DSI permits anisotropic noise optimized for the specific geometry of $(U, R_{[1:m]})$. The optimal covariance $\Sigma$ minimizing the expected noise magnitude $\mathbb{E}[\|\mathbf{e}\|_2^2]$ subject to the DSI constraints is the solution of a convex optimization over $m$ Lagrangian multipliers; we use this in our framework (\S~\ref{sec:dsisyn}).

\section{Implementation Details}
\label{sec:implementation}

We list the concrete hyperparameter values used throughout our experiments. All settings are held fixed unless stated otherwise.
All experiments are run on A100 GPUs.

\subsection{Diffusion Backbone}
We use \texttt{LLaDA-8B-Instruct}~\citep{nie2025large} and \texttt{Dream-7B}~\citep{ye2025dream} as the backbone masked diffusion language models. For these models, every generation uses the settings below.
\begin{itemize}[itemsep=1pt, leftmargin=24pt, topsep=3pt]
    \item Sequence length: $L = 448$ tokens. Note that LLaDA takes a semi-autoregressive approach, with default block size $32$.
    \item We use the default number of diffusion (denoising) steps: $T = L = 448$.
    \item Denoise schedule $\rho_t$: We take the default linear schedule used by \texttt{LLaDA-8B-Instruct} and \texttt{Dream-7B}.
\end{itemize}

\subsection{Semantic-Aware Sampling (\A~\ref{alg:sampling})}
\begin{itemize}[itemsep=1pt, leftmargin=24pt, topsep=3pt]
    \item Candidate pool factor: $\alpha = 20$.
    \item Top-$N$ filter size: $N = 32$.
    \item Log-probability weight: $a = 1$.
    \item Hamming-distance weight: $b = 0.1$.
\end{itemize}
We did not tune $a$ and $b$ extensively; Appendix~\ref{sec:appendix-sensitivity} shows that varying $b/a \in \{0.1, 0.2, 0.5\}$ and $\alpha \in \{5, 10, 20\}$ has little effect on generation quality.

\subsection{Semantic-Aware Re-Masking (\A~\ref{alg:remasking})}
\begin{itemize}[itemsep=1pt, leftmargin=24pt, topsep=3pt]
    \item Denoise schedule $\rho_t$: We take the same default linear schedule used by the backbone \texttt{LLaDA-8B-Instruct} and \texttt{Dream-7B}.
\end{itemize}

\subsection{DP and DSI Aggregation}
\begin{itemize}[itemsep=1pt, leftmargin=24pt, topsep=3pt]
    \item DP clustering: \citet{chang2021differentially}'s implementation with default hyperparameters; the privacy budget $(\varepsilon, \delta)$ is specified per experiment. The number of clusters is set to $100$. For each class, the number of synthetic samples generated from each released centroid is fixed uniformly in advance. We do not use private cluster sizes.
    \item DSI optimal Gaussian noise: computed via the convex optimization formulation described in Appendix~\ref{sec:appendix-dsi-background} (see also~\citet{xiao2025trustworthy}), with the security parameter $\varepsilon$ specified per experiment ($\varepsilon=1$ in Appendix~\ref{sec:appendix-dsi-backdoor}, $\varepsilon=2$ in Appendix~\ref{sec:appendix-memorization}).
\end{itemize}

\subsection{Embedding Models}
We deliberately use two different semantic embedding models chosen to match the dominant semantic axis of each dataset. \texttt{sentence-t5-base}~\citep{ni2022sentence}, trained on a broad mixture of sentence-level tasks, produces embeddings that emphasize topical and factual content; we therefore use it for \texttt{PubMed} (medical abstracts grouped by topic area) and \texttt{Yahoo} (10 explicit topic categories). \texttt{stsb-roberta-base-v2}~\citep{stsbrobert}, fine-tuned on the STS-Benchmark sentence-similarity task that emphasizes evaluative and stylistic similarity, captures sentiment- and style-driven semantics more reliably, so we use it for the review-heavy datasets \texttt{ICLR26}, \texttt{ICLR23}, \texttt{Yelp}, and \texttt{IMDB}, where rating and stylistic register dominate the relevant signal.
\begin{itemize}[itemsep=1pt, leftmargin=24pt, topsep=3pt]
    \item \texttt{ICLR26}, \texttt{ICLR23}, \texttt{Yelp}, \texttt{IMDB}: \texttt{stsb-roberta-base-v2}~\citep{stsbrobert}.
    \item \texttt{PubMed}, \texttt{Yahoo}: \texttt{sentence-t5-base}~\citep{ni2022sentence}.
\end{itemize}

\section{Sensitivity Analyses}
\label{sec:appendix-sensitivity}

\textbf{\textit{Privacy Budget.}}~\T~\ref{tab:sens-eps} varies $\varepsilon \in \{1, 2, 4\}$ with $\delta=10^{-6}$ using \texttt{LLaDA-8B-Instruct} on \texttt{PubMed}.
The accuracy of Guided Diffusion changes by at most $0.2$ points, so the strict budget $\varepsilon=1$ used in our main results costs little utility relative to looser budgets. At $\varepsilon=1$, Guided Diffusion is within $2$ points of the non-private upper bound of fine-tuning on the real private corpus ($34.60\%$). For comparison, the table also lists the baselines across budgets: PE and \texttt{DP-SGD-FT}~\citep{yue2023synthetic} on \texttt{GPT-2 Large} from PE's official implementation~\citep{pe2024}, which use their default generators, and \texttt{DP-SGD}+\texttt{LoRA} on \texttt{Llama-3-8B}~\citep{li2024llmpbe}. Guided Diffusion exceeds every DP baseline at each tested budget, and even PE without DP noise ($\varepsilon=\infty$, $28.60\%$). Scaling DP-SGD to \texttt{Llama-3-8B} with LoRA does not help: its accuracy stays within $24.4\%$--$24.9\%$ across budgets, below \texttt{DP-SGD-FT} on \texttt{GPT-2 Large} at $\varepsilon=1$.

\begin{table}[!t]
\caption{Effect of the privacy budget on \texttt{PubMed} accuracy, with $\delta=10^{-6}$. Guided Diffusion uses \texttt{LLaDA-8B-Instruct}. $^\dagger$Results from PE's official implementation, where each baseline uses its default generator.}
\label{tab:sens-eps}
\centering
\small
\begin{tabular}{lcc}
    \toprule
    \textbf{Method} & $\boldsymbol{\varepsilon}$ & \textbf{Accuracy (\%)} \\
    \midrule
    \multicolumn{2}{l}{Without fine-tuning} & 0.07 \\
    \multicolumn{2}{l}{Fine-tuning on real data (non-DP)} & 34.60 \\
    \midrule
    \multirow{2}{*}{PE$^\dagger$} & $1$ & 27.90 \\
    & $\infty$ & 28.60 \\
    \midrule
    \multirow{2}{*}{DP-SGD-FT (GPT-2 Large)$^\dagger$} & $1$ & 31.10 \\
    & $\infty$ & 33.10 \\
    \midrule
    \multirow{3}{*}{DP-SGD+LoRA (Llama-3-8B)} & $1$ & 24.88 \\
    & $2$ & 24.70 \\
    & $4$ & 24.37 \\
    \midrule
    \multirow{3}{*}{\textbf{Guided Diffusion}} & $1$ & \textbf{32.68} \\
    & $2$ & \textbf{32.65} \\
    & $4$ & \textbf{32.48} \\
    \bottomrule
\end{tabular}
\end{table}

\textbf{\textit{Number of Synthetic Samples.}}~Since the synthetic texts are a post-processing of the released DP centroids, generating more of them incurs no additional privacy cost. \T~\ref{tab:sens-size} shows that utility improves as the synthetic training set grows from $100$ to $2{,}000$ samples: macro-F1 rises from $13.13\%$ to $24.85\%$ on \texttt{ICLR26} with \texttt{LLaDA-8B-Instruct}.

\textbf{\textit{Generation Hyperparameters.}}~\T~\ref{tab:sens-hparam} varies the candidate pool factor $\alpha$, the weight ratio $b/a$ of the sampling score, and the number of denoising steps $T$. We report the cosine similarity between each generated text and its target embedding under the guidance encoder (Guide sim.) and under an independent MPNet encoder (MPNet sim.), together with perplexity (PPL). Pool size and weighting have little effect: across $\alpha \in \{5, 10, 20\}$ and $b/a \in \{0.1, 0.2, 0.5\}$, guide similarity stays within $0.691$--$0.693$ and perplexity within $4.20$--$4.48$. Halving the denoising steps to $T=224$ lowers guide similarity from $0.692$ to $0.676$ and nearly doubles perplexity, from $4.22$ to $7.95$.
This degradation with fewer denoising steps is expected: it is a property of the DLM rather than of our guidance design.
We therefore recommend using the default number of denoising steps of each DLM.

\begin{table}[!t]
\caption{Effect of the number of synthetic samples on \texttt{ICLR26} (\texttt{LLaDA-8B-Instruct}, $\varepsilon=1$), reported as mean$_{\pm\text{std}}$.}
\label{tab:sens-size}
\centering
\small
\begin{tabular}{rc}
    \toprule
    \textbf{Synthetic samples} & \textbf{Macro-F1 (\%)} \\
    \midrule
    $100$ & 13.13$_{\pm 5.59}$ \\
    $500$ & 20.46$_{\pm 6.30}$ \\
    $1{,}000$ & 20.83$_{\pm 5.71}$ \\
    $2{,}000$ & \textbf{24.85}$_{\pm 2.80}$ \\
    \bottomrule
\end{tabular}
\end{table}

\begin{table}[!t]
\caption{Effect of generation hyperparameters on \texttt{ICLR26}. Each row changes one hyperparameter from the default.}
\label{tab:sens-hparam}
\centering
\small
\begin{tabular}{lccc}
    \toprule
    \textbf{Setting} & \textbf{Guide sim.} $\uparrow$ & \textbf{MPNet sim.} $\uparrow$ & \textbf{PPL} $\downarrow$ \\
    \midrule
    Default ($T=448$, $\alpha=20$, $b/a=0.1$) & 0.6920 & 0.5396 & 4.218 \\
    $T=224$ & 0.6756 & 0.5335 & 7.948 \\
    $\alpha=5$ & 0.6929 & 0.5345 & 4.482 \\
    $\alpha=10$ & 0.6930 & 0.5405 & 4.352 \\
    $b/a=0.2$ & 0.6919 & 0.5385 & 4.247 \\
    $b/a=0.5$ & 0.6910 & 0.5415 & 4.203 \\
    \bottomrule
\end{tabular}
\end{table}

\section{Runtime Profile}
\label{sec:appendix-runtime}
\T~\ref{tab:runtime} profiles Guided Diffusion with \texttt{LLaDA-8B-Instruct} at batch size $16$ in BF16 on a single A100-SXM4-40GB, the same configuration used for PE.
One output uses one complete denoising trajectory of $T=448$ DLM forward passes; the encoder is called in batched form on the filtered candidates and on the token-ablated variants, which amounts to $897$ logical encoder calls (executed as $10{,}527$ internal micro-batches) per batch.
Generating $2{,}000$ samples takes $21.94$ hours on one A100 and $2.74$ hours on eight, versus $12.20$ hours for four rounds of PE with the same backbone on the same eight GPUs.

\begin{table}[!t]
\caption{Runtime profile of Guided Diffusion (\texttt{LLaDA-8B-Instruct}, batch size $16$, BF16, one A100-SXM4-40GB).}
\label{tab:runtime}
\centering
\small
\begin{tabular}{lr}
    \toprule
    \textbf{Measurement} & \textbf{Value} \\
    \midrule
    Wall-clock time per $16$-sample batch & 631.86 s \\
    Amortized time per sample & 39.49 s \\
    $2{,}000$ samples on 1 / 8 A100s & 21.94 h / 2.74 h \\
    Peak allocated memory & 16.35 GiB \\
    DLM forward passes per batch & 448 \\
    Logical encoder calls / internal micro-batches per batch & 897 / 10{,}527 \\
    \midrule
    \textbf{Component} & \textbf{Share of time} \\
    \midrule
    DLM forward passes & 47.6\% \\
    Encoder forward passes & 29.0\% \\
    Candidate construction & 17.0\% \\
    Decoding / other & 3.7\% / 2.7\% \\
    \bottomrule
\end{tabular}
\end{table}

\section{Evaluation: Backdoor-Resistant Synthetic Text Generation}
\label{sec:appendix-dsi-backdoor}
We evaluate DSI Guided Diffusion against a semantic backdoor attack in synthetic text generation.

\noindent\textbf{Experimental Setup.}~
For \textit{dataset}, we use \texttt{ICLR23}, the dataset of ICLR 2023 paper reviews that the authors of PE~\citep{pe2024} collected from OpenReview, to aggregate semantic embeddings and generate synthetic reviews.
The downstream utility is evaluated by training a classifier on $2{,}000$ synthetic samples for review-rating prediction; accuracy is measured on the \texttt{ICLR23} test split.
Unlike the DP evaluation in \S~\ref{subsec:dp-eval}, which treats each dataset as a single private corpus, here the data is partitioned across $m=3$ sources, one of which may be malicious.
For labeled data, aggregation is performed class-conditionally, and synthetic samples inherit the rating label of their source class/cluster centroid.
For \textit{models}, we employ \texttt{LLaDA-8B-Instruct} as the backbone DLM and \texttt{stsb-roberta-base-v2} as the semantic embedding encoder; we fine-tune \texttt{roberta-base}~\citep{liu2019roberta} as the text classifier.

\textit{Threat Model.}~
We consider a distributed setting where data is contributed from $m=3$ sources.
Within each class/semantic cluster, the DSI protected groups are the source-specific subsets, matching the protected-group definition in Proposition~\ref{prop:dsi-guarantee}.
One source may be malicious and inject poisoned reviews to manipulate the review-rating classifier, while the other sources provide non-poisoned data.
The attacker uses a semantic trigger: it generates paper reviews on the topic of medical images and labels them as ``strong reject.''
If the attack succeeds, reviews about medical images are classified as strong reject at inference time.
The attack success rate (ASR) is measured on a held-out poisoned test set; the attacker injects $45$ poisoned samples into the training corpus, and ASR is evaluated on $60$ triggered test samples.

\textit{Defense Mechanism and Baselines.}~
We apply DSI Guided Diffusion with leave-one-group-out aggregation.
For each class/cluster centroid, we compute the optimal DSI Gaussian noise required to make the released centroid indistinguishable from one computed without any single source group in that class/cluster.
We set the DSI security parameter to $\varepsilon=1$ (see \S~\ref{sec:dsisyn}).
We compare four methods: (1) \texttt{Orig-benign}, training the classifier on the original benign dataset; (2) \texttt{Orig-backdoor}, training on the original dataset with the backdoor planted; (3) \texttt{Synthetic-w/o-DSI}, training on synthetic data generated by Guided Diffusion without DSI noise; and (4) \texttt{Synthetic-w/-DSI}, training on synthetic data generated using DSI Guided Diffusion.

\noindent\textbf{Experimental Results.}~
The results are summarized in \T~\ref{tab:dsi-backdoor}.
We evaluate clean accuracy on benign data (ACC) and backdoor attack success rate (ASR).

\textit{Vulnerability of Standard Training.}~
As shown in the first two rows of \T~\ref{tab:dsi-backdoor}, training on the original benign data yields an ACC of $47.80\%$ and 0\% ASR.
When the backdoor is planted, the model remains useful ($48.82\%$ ACC) but becomes fully compromised, exhibiting 100\% ASR.
This confirms that standard training is highly susceptible to this semantic poisoning attack.

\textit{Empirical Robustness of Guided Diffusion.}~
Guided Diffusion already provides strong empirical mitigation even without explicit DSI noise.
Generating synthetic data from raw aggregated embeddings drops ASR to 0\% on the $60$ triggered test examples.
This suggests that semantic embedding aggregation followed by Guided Diffusion reconstruction disrupts this topic-triggered backdoor pattern.

\textit{Certified Mitigation with DSI.}~
While Guided Diffusion itself mitigates this attack empirically, DSI adds a formal indistinguishability certificate relative to the leave-one-group-out reference.
With $\varepsilon=1$, DSI Guided Diffusion also maintains 0\% ASR while achieving an ACC of $37.74\%$, a small utility drop relative to \texttt{Synthetic-w/o-DSI} ($39.85\%$).
For context, this accuracy is comparable to PE without DP noise ($38.5\%$ using \texttt{Llama-2-7B}), while additionally providing the DSI certificate.

\begin{table}[!t]
    \centering
    \caption{Evaluation against a semantic backdoor attack. ASR is measured on $60$ held-out triggered examples. DSI Guided Diffusion preserves the empirical 0\% ASR while adding a formal indistinguishability certificate.}
    \label{tab:dsi-backdoor}
    \small
    \setlength{\tabcolsep}{4pt}
    \begin{tabular}{lcc}
        \toprule
        \textbf{Method} & \textbf{ACC (\%)} & \textbf{ASR (\%)} \\
        \midrule
        \texttt{Orig-benign} & 47.80 & 0 \\
        \texttt{Orig-backdoor} & 48.82 & 100 \\
        \midrule
        \texttt{Synthetic-w/o-DSI} & \textbf{39.85} & \textbf{0} \\
        \textbf{\texttt{Synthetic-w/-DSI}} & \textbf{37.74} & \textbf{0} \\
        \bottomrule
    \end{tabular}
    \vspace{-4pt}
\end{table}

\section{Evaluation: Memorization-Mitigated Synthetic Text Generation}
\label{sec:appendix-memorization}
Beyond data poisoning, DSI is also designed to mitigate \textit{unintended memorization}, where a generative model regenerates verbatim training samples~\citep{carlini2021extracting}. We evaluate whether our DSI Guided Diffusion can certifiably bound such memorization on a sensitive review corpus.

\noindent\textbf{Experimental Setup.}~We adopt the same backbone and embedding models as in Appendix~\ref{sec:appendix-dsi-backdoor} and modify the threat model to target memorization.
For \textit{dataset}, we use $50$ 1-star Grocery reviews randomly drawn from \texttt{Yelp}~\citep{yelp2023}.

\textit{Threat Model.}~An adversary attempts to extract training samples via prefix-based completion: for each training sample, we provide the first $100$ tokens as a prefix and generate $10$ completions, then measure the maximum ROUGE-L between any generated suffix and the true suffix. Values below $0.2$ indicate low memorization.

ROUGE-L~\citep{lin2004rouge} is an F-score based on the longest common subsequence between two token sequences and is widely used to quantify near-verbatim overlap; in our setting, a higher ROUGE-L between a generated suffix and the true suffix indicates more verbatim or near-verbatim copying of the training sample.
For each prefix we report the maximum ROUGE-L over the $10$ generated completions, using $0.2$ as a conservative cutoff below which generations are considered to share no significant verbatim content with the reference.

\textit{Defense Mechanism and Baselines.}~We apply DSI Guided Diffusion with leave-one-out aggregation at the individual sample level (there is a single cluster in this setting), calibrating the optimal DSI Gaussian noise so that the generated centroid is statistically indistinguishable from one computed without any single training sample.
We set the security parameter to $\varepsilon=2$.
We compare against \texttt{Synthetic-w/o-DSI}, which conditions Guided Diffusion on the average sample embedding without any DSI noise, isolating the empirical contribution of guided generation.

\noindent\textbf{Experimental Results.}~\T~\ref{tab:dsi-memorization} reports the max ROUGE-L value averaged over the samples (lower is better) and semantic alignment (cosine similarity to the target embedding; higher is better). The semantic alignment values reported here are higher than those in \S~\ref{sec:guided-diffusion-design} (e.g., $0.59$ on \texttt{IMDB}) because the 1-star Grocery subset of \texttt{Yelp} is far more topically homogeneous than \texttt{IMDB}, which spans diverse movies, genres, and stylistic registers.

\textit{Empirical Memorization Mitigation by Guided Diffusion.}~Both variants achieve max ROUGE-L well below the $0.2$ memorization threshold ($0.1446$ for \texttt{Synthetic-w/o-DSI} and $0.1460$ for \texttt{Synthetic-w/-DSI}), confirming that Guided Diffusion empirically suppresses verbatim copying of training samples even before any DSI noise is added. This mirrors the Appendix~\ref{sec:appendix-dsi-backdoor} finding that the framework offers strong empirical robustness on its own.

\textit{Certified Memorization Mitigation by DSI.}~The cost of converting empirical robustness into a formal indistinguishability guarantee is small: applying DSI Gaussian noise at $\varepsilon=2$ keeps max ROUGE-L small while keeping semantic alignment high ($0.76$ vs.\ $0.80$ without DSI).
Crucially, DSI extends the empirical guarantee to a formal indistinguishability bound: it certifies closeness to the leave-one-out centroid-generation pipeline while preserving low observed copying. It bounds the influence of individual samples under the specified DSI mechanism and reference construction.

\begin{table}[!t]
    \centering
    \caption{Memorization evaluation on \texttt{Yelp} 1-star Grocery reviews. Max ROUGE-L below $0.2$ indicates low memorization. DSI Guided Diffusion provides formal indistinguishability with a small utility cost. Results are reported as mean$_{\pm\text{std}}$.}
    \label{tab:dsi-memorization}
    \small
    \setlength{\tabcolsep}{4pt}
    \begin{tabular}{lcc}
        \toprule
        \textbf{Method} & \textbf{Max ROUGE-L $\downarrow$} & \textbf{Semantic Alignment $\uparrow$} \\
        \midrule
        \texttt{Synthetic-w/o-DSI} & 0.1446$_{\pm 0.0223}$ & 0.80$_{\pm 0.07}$ \\
        \textbf{\texttt{Synthetic-w/-DSI}} & 0.1460$_{\pm 0.0187}$ & 0.76$_{\pm 0.07}$ \\
        \bottomrule
    \end{tabular}
    \vspace{-4pt}
\end{table}

\section{Formal Guarantees and Proof Sketches}
\label{sec:appendix-proofs}

\begin{proposition}[DP Guarantee]
\label{prop:dp-guarantee}
Let $\mathcal{D} = \{(x_i, y_i)\}_{i=1}^N$ denote the private corpus, where $y_i$ is an optional class label, and let $\mathcal{Y}$ denote the labeled synthetic dataset produced by the DP Guided Diffusion framework. Suppose the following conditions hold:
\begin{enumerate}[itemsep=1pt, leftmargin=14pt, topsep=2pt]
    \item \textbf{Public pretrained models.} The embedding model $\mathcal{E}$ and the diffusion language model $\mathcal{M}_\theta$ are pretrained on a corpus disjoint from $\mathcal{D}$ and are treated as public.
    \item \textbf{Data-independent configuration.} The requested number of clusters $K$, the per-class synthetic sample budget, the generation hyperparameters, and the choice of $\mathcal{E}$ and $\mathcal{M}_\theta$ are fixed before $\mathcal{D}$ is accessed.
    \item \textbf{Class-conditional DP aggregation.} The DP clustering algorithm~\citep{chang2021differentially} is run separately within each label $c$, releasing noisy centroids $E_{dp,c}$ (and, if the algorithm returns a data-dependent number of centroids, that number) that are $(\varepsilon, \delta)$-DP with respect to add/remove adjacency on the per-class subset $\mathcal{D}_c = \{x_i : y_i = c\}$. (For an unlabeled corpus, this reduces to a single class.) Nothing else derived from $\mathcal{D}$, in particular no raw cluster size or assignment, is released.
    \item \textbf{Fixed generation procedure.} Guided Diffusion (Algorithms~\ref{alg:sampling} and~\ref{alg:remasking}) is a randomized function of $\{E_{dp,c}\}$ and the public models alone, and does not access $\mathcal{D}$. Each synthetic sample inherits the class label $c$ of its source per-class centroid.
\end{enumerate}
Then the released synthetic dataset $\mathcal{Y} = \{(\tilde{x}_j, c_j)\}$ is $(\varepsilon, \delta)$-DP with respect to $\mathcal{D}$ under add/remove adjacency. Under replace-one adjacency, where a single record may change its label, $\mathcal{Y}$ is $(2\varepsilon, 2\delta)$-DP.
\end{proposition}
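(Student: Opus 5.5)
The plan is to factor the released dataset as a post-processing of the joint centroid release and combine the three standard properties listed in \S~\ref{sec:preliminaries}. Write the pipeline as $\mathcal{Y} = G(\mathcal{A}(\mathcal{D}))$. Here $\mathcal{A}(\mathcal{D}) = (\mathcal{A}_c(\mathcal{D}_c))_{c}$ is the tuple of per-class DP clustering outputs: noisy centroids and, possibly, their count. $G$ is Guided Diffusion (Algorithms~\ref{alg:sampling} and~\ref{alg:remasking}) together with the label-inheritance and per-class budget rules.

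By conditions 1, 2 and 4, $G$ is a randomized map whose only inputs are $\mathcal{A}(\mathcal{D})$, the public models $\mathcal{E},\mathcal{M}_\theta$, and hyperparameters fixed before accessing $\mathcal{D}$. Its internal randomness is independent of $\mathcal{D}$. This includes the sampling noise, the candidate pools, and the ablation-based re-masking. So $G$ is a Markov kernel not depending on $\mathcal{D}$.

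I would first fix the partition rule. Splitting $\mathcal{D}$ into $\{\mathcal{D}_c\}$ uses only each record's own label and a public label set, so it is a public rule.

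\textbf{Add/remove.} Let $\mathcal{D}' = \mathcal{D}\cup\{(x,y)\}$. Then $\mathcal{D}'_c = \mathcal{D}_c$ for every $c\neq y$, and $\mathcal{D}'_y$ is add/remove adjacent to $\mathcal{D}_y$. Condition 3 states that each $\mathcal{A}_c$ is $(\varepsilon,\delta)$-DP, and the mechanisms use independent randomness. For any measurable $S$ in the product space, I condition on the outputs of the untouched classes, which have identical distributions under both datasets. Applying the DP inequality to $\mathcal{A}_y$ on each section of $S$ and integrating gives $\Pr[\mathcal{A}(\mathcal{D})\in S]\le e^{\varepsilon}\Pr[\mathcal{A}(\mathcal{D}')\in S]+\delta$. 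This is parallel composition (iii).

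Post-processing invariance (ii) then transfers the bound to $\mathcal{Y}=G(\mathcal{A}(\mathcal{D}))$. Concretely, $\Pr[\mathcal{Y}\in S] = \mathbb{E}[K_G(\mathcal{A}(\mathcal{D}),S)]$ with $K_G$ taking values in $[0,1]$, and the DP inequality extends from indicators to such $[0,1]$-valued functions.

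\textbf{Replace-one.} A replacement of $(x,y)$ by $(x',y')$ equals a removal followed by an addition. If $y=y'$, only $\mathcal{D}_y$ changes. If the mechanism is run on fixed-size inputs, that change is a single replacement, which reduces to two add/remove steps inside one class. If $y\ne y'$, the change removes one record from $\mathcal{D}_y$ and adds one to $\mathcal{D}_{y'}$. In both cases I chain the add/remove bound twice through the intermediate dataset $\mathcal{D}\setminus\{(x,y)\}$. This gives $\Pr[\mathcal{A}(\mathcal{D})\in S]\le e^{2\varepsilon}\Pr[\mathcal{A}(\mathcal{D}'')\in S]+(1+e^{\varepsilon})\delta$. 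The stated $(2\varepsilon,2\delta)$ then requires one of two routes. The first is a sharper argument for the case $y\neq y'$: the two affected classes are independent, so I can bound the product of two independent $(\varepsilon,\delta)$ changes directly and obtain $e^{2\varepsilon}$ with additive term $2\delta$ when each factor's $\delta$ is charged separately. The second is simply to adopt the paper's convention in (iii).

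\textbf{Main obstacle.} The main obstacle is not the composition algebra. It is verifying that nothing data-dependent leaks outside $\mathcal{A}$. Three things must be checked. First, the per-class sample budget and the per-centroid generation count must be fixed in advance and not derived from cluster sizes. Second, the set of class labels must be public, so that the existence of an empty class is not revealed except through $\mathcal{A}_c$'s own DP output. Third, the data-dependent number of centroids must be counted as part of the DP output, so that the number of calls to $G$ is itself post-processing. I would state these explicitly, using condition 2 and the last sentence of condition 3, before invoking (ii). Finally, condition 1 is used to rule out dependence of $\mathcal{E}$ and $\mathcal{M}_\theta$ on $\mathcal{D}$.
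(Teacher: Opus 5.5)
Your add/remove argument and your cross-class replace-one argument follow the paper's proof. The paper uses parallel composition over the label partition, then basic composition over the two independent per-class mechanisms when $y\neq y'$, then post-processing through the data-independent kernel $G$. You also make explicit the independence of the per-class randomness, which the paper's DP sketch uses tacitly. Your phrase about charging each factor's $\delta$ separately can be made rigorous as follows. Let $P_i'$ have density $\min(p_i,e^{\varepsilon}q_i)$, so that $P_1\times P_2=P_1'\times P_2'+(P_1-P_1')\times P_2+P_1'\times(P_2-P_2')$. The first term is at most $e^{2\varepsilon}\,Q_1\times Q_2$, and each of the other two has total mass at most $\delta$.

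The gap is the case you flagged, $y=y'$, where your fallback of adopting the paper's convention in (iii) is not an argument. The paper's sketch does not close this case either: it only treats a record that leaves one class and enters another. In fact, no proof that bounds the centroid release and then post-processes can give $2\delta$ here. Your chained bound $(2\varepsilon,(1+e^{\varepsilon})\delta)$ is tight for a per-class mechanism that is only assumed $(\varepsilon,\delta)$-DP under add/remove, as the following example shows.
\begin{itemize}
\item Let the class-$y$ mechanism output a bit $b$ with $\Pr[b=1\mid D]=p_{g(D)}$, where $g(D)=\mathbf{1}[x\in D]-\mathbf{1}[x'\in D]$ and $(p_1,p_0,p_{-1})=((1+e^{\varepsilon})\delta,\delta,0)$.
\item An add/remove step changes $g$ by at most one. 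For small $\delta$ (e.g.\ the paper's $\delta=10^{-6}$), adjacent values of $g$ give $(\varepsilon,\delta)$-indistinguishable bits in both directions, so condition 3 holds.
\item Replacing $x$ by $x'$ within class $y$ moves $g$ from $1$ to $-1$, and $(1+e^{\varepsilon})\delta>e^{2\varepsilon}\cdot 0+2\delta$.
\end{itemize}
To get a correct replace-one clause, you have three options:
\begin{itemize}
\item restrict replace-one to label-changing moves, as the paper's sketch implicitly does;
\item strengthen condition 3 so that each per-class release is itself $(2\varepsilon,2\delta)$-DP under substitution; or
\item state the same-class, and hence overall, replace-one bound as $(2\varepsilon,(1+e^{\varepsilon})\delta)$.
\end{itemize}
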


\begin{proof}[Proof sketch of Proposition~\ref{prop:dp-guarantee} (DP Guarantee)]
By condition~(3), each per-class centroid release $E_{dp,c}$ is $(\varepsilon, \delta)$-DP w.r.t.\ $\mathcal{D}_c$. Under add/remove adjacency a single record belongs to exactly one class, so the per-class subsets $\{\mathcal{D}_c\}$ form a disjoint partition of $\mathcal{D}$ and parallel composition yields $(\varepsilon, \delta)$-DP for the joint release $\{E_{dp,c}\}$ w.r.t.\ $\mathcal{D}$; under replace-one adjacency a record may leave one class and enter another, affecting two independent mechanisms, and basic composition gives $(2\varepsilon, 2\delta)$. Conditions~(1), (2), and~(4) ensure that $\mathcal{Y}$ is a randomized function of $\{E_{dp,c}\}$ and public model components only, with no further dependence on $\mathcal{D}$. Post-processing invariance (\S~\ref{sec:preliminaries}) preserves the $(\varepsilon, \delta)$-DP guarantee on the final synthetic data release.
\end{proof}

\begin{proposition}[DSI Guarantee]
\label{prop:dsi-guarantee}
Let $\mathcal{X}=\{(x_i,y_i)\}$ denote the input corpus, where $y_i$ is an optional class label.
After optional class-conditional clustering, let $C_{c,r}$ denote cluster $r$ within class $c$; for an unlabeled corpus, there is a single class.
Let $S_s$ denote the records from source $s$.
We define DSI protected groups locally within each class/cluster partition:
\[
    G_{c,r,s}=C_{c,r}\cap S_s .
\]
The nonempty $G_{c,r,s}$ form a disjoint partition of the clustered corpus.
For leave-$k$-groups-out DSI, let $A\subseteq [m_{c,r}]$ with $|A|=k$, define
\[
    G_{c,r,A}=\bigcup_{s\in A}G_{c,r,s},
    \qquad
    R_{c,r,A}=C_{c,r}\setminus G_{c,r,A}.
\]
Let $\mathcal{Y}$ denote the synthetic dataset produced by DSI Guided Diffusion. Suppose the following conditions hold:
\begin{enumerate}[itemsep=1pt, leftmargin=14pt, topsep=2pt]
    \item \textbf{Public pretrained models.} The embedding model $\mathcal{E}$ and the diffusion language model $\mathcal{M}_\theta$ are fixed public mechanisms independent of $\mathcal{X}$, and all hyperparameters, including the clustering configuration, are fixed before $\mathcal{X}$ is accessed.
    \item \textbf{Class/cluster-conditional DSI aggregation.} For each class/cluster pair $(c,r)$, the DSI Gaussian mechanism releases a sanitized centroid
    \[
        E_{\mathrm{dsi},c,r}=F(C_{c,r})+e_{c,r}
    \]
    satisfying
    \[
        D_f\!\left(P_{F(C_{c,r})+e_{c,r}}\,\middle\|\,P_{F(R_{c,r,A})+e_{c,r}}\right)
        \le \gamma_{c,r,A}
    \]
    for every leave-$k$-groups-out reference $R_{c,r,A}$. The randomness used for different class/cluster releases is independent. When $D_f$ is Rényi divergence, we report the corresponding DSI security parameter $\varepsilon=\max_{c,r,A}\gamma_{c,r,A}$.
    \item \textbf{Fixed generation procedure.} The Guided Diffusion procedure $\mathcal{G}$ is a randomized function of $\{E_{\mathrm{dsi},c,r}\}$ and the public models alone, and does not access $\mathcal{X}$. Each synthetic sample inherits the label of its source class/cluster centroid.
\end{enumerate}
Then, for every protected group collection $G_{c,r,A}$, the released synthetic dataset $\mathcal{Y}$ satisfies the $\gamma_{c,r,A}$-DSI indistinguishability bound with respect to the leave-$k$-groups-out reference $R_{c,r,A}$.
In particular, if any protected cluster-source group, or any $k$ protected groups within the same class/cluster partition, are corrupted, $\mathcal{Y}$ remains statistically close to the dataset that would have been produced from the corresponding safe reference.
\end{proposition}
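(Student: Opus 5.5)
The plan is to mirror the proof of Proposition~\ref{prop:dp-guarantee}: first establish indistinguishability at the level of the joint centroid release, then push it through the generator by a data-processing argument. Fix a target collection $G_{c_0,r_0,A}$ inside one class/cluster pair $(c_0,r_0)$. The reference dataset is $\mathcal{X}' = \mathcal{X}\setminus G_{c_0,r_0,A}$. Because the protected groups are defined locally within the partition and are disjoint, removing $G_{c_0,r_0,A}$ changes only $C_{c_0,r_0}$, which becomes $R_{c_0,r_0,A}$. Every other cluster $C_{c,r}$ is left untouched.

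This requires care about the clustering step. The partition should be regarded as part of the fixed reference construction: the reference pipeline uses the \emph{same} partition, with only the records of $G_{c_0,r_0,A}$ deleted, rather than reclustering $\mathcal{X}'$. This is how the statement defines $R_{c,r,A}$, so I will state it explicitly as the interpretation of the reference.

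\textbf{Step 1 (joint release).} Write the joint release as $\mathbf{E}=(E_{\mathrm{dsi},c,r})_{c,r}$. Under the target input, $\mathbf{E}$ is distributed as a product measure $P_{c_0,r_0}\otimes Q$. Here $P_{c_0,r_0}$ is the law of $F(C_{c_0,r_0})+e_{c_0,r_0}$, and $Q$ is the product of the laws of the other cluster releases; the product structure uses the independence of the noise assumed in condition~(2). Under the reference, the law is $P'_{c_0,r_0}\otimes Q$, with the same $Q$, where $P'_{c_0,r_0}$ is the law of $F(R_{c_0,r_0,A})+e_{c_0,r_0}$.

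For an $f$-divergence, and also for Rényi divergence, the divergence between $P\otimes Q$ and $P'\otimes Q$ equals the divergence between $P$ and $P'$. For Rényi divergence this follows from additivity over product measures together with $D(Q\|Q)=0$. For a general $f$-divergence it follows by integrating out the common factor: the likelihood ratio $dP/dP'$ does not depend on the $Q$ coordinates. Condition~(2) then bounds the joint divergence by $\gamma_{c_0,r_0,A}$.

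\textbf{Step 2 (post-processing).} By condition~(3), $\mathcal{Y}=\mathcal{G}(\mathbf{E};\xi)$ is a Markov kernel applied to $\mathbf{E}$ that is identical under both inputs. It depends only on public models, fixed hyperparameters, and independent internal randomness $\xi$. Label inheritance is a deterministic function of which centroid was used, so it is part of the same kernel. The data-processing inequality for $D_f$ (or for Rényi divergence) then gives
\[
D_f\!\left(P_{\mathcal{Y}(\mathcal{X})}\,\middle\|\,P_{\mathcal{Y}(\mathcal{X}')}\right)\le \gamma_{c_0,r_0,A}.
\]
This is exactly the claimed DSI bound. Condition~(1) guarantees that $\mathcal{E}$ and $\mathcal{M}_\theta$ carry no hidden dependence on $\mathcal{X}$, so the kernel really is the same in both worlds. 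Since $(c_0,r_0,A)$ was arbitrary, the bound holds for every protected collection, and the "in particular" statement about corrupted groups is just a restatement of this.

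\textbf{Main obstacle.} I expect the hardest part to be justifying that the rest of the pipeline is identical under target and reference. The clustering is data-dependent (e.g.\ K-Means), so a reference built by reclustering $\mathcal{X}'$ could change other clusters, and parallel composition would fail. I will handle this by making explicit, as above, that the reference keeps the partition fixed, matching the definition $R_{c,r,A}=C_{c,r}\setminus G_{c,r,A}$. Once that is settled, Steps~1 and~2 are routine applications of the tensorization and data-processing properties of the divergence.
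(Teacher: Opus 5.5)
Your proposal is correct and follows essentially the same route as the paper's proof sketch. Independence of the per-cluster noise makes the joint-release divergence equal that of the single changed coordinate, which condition~(2) bounds by $\gamma_{c_0,r_0,A}$. Post-processing through the fixed generator $\mathcal{G}$ then carries the bound to $\mathcal{Y}$. Two points you add are left implicit in the paper: the reference keeps the partition fixed instead of reclustering $\mathcal{X}\setminus G_{c_0,r_0,A}$, and the product-measure identity holds for both $f$- and R\'enyi divergences.
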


\begin{proof}[Proof sketch of Proposition~\ref{prop:dsi-guarantee}]
By condition~(2), each class/cluster centroid release $E_{\mathrm{dsi},c,r}$ satisfies the stated DSI bound with respect to its leave-$k$-groups-out references.
Consider any protected group collection $G_{c,r,A}$ and the corresponding reference $R_{c,r,A}$.
Removing $G_{c,r,A}$ changes only the input to the $(c,r)$ cluster-level DSI mechanism; all other class/cluster releases have identical distributions under the target and reference inputs.
Since the class/cluster releases use independent randomness, the divergence of the joint centroid release equals the divergence of the changed coordinate:
\[
    D_f\!\left(
    P_{\{E_{\mathrm{dsi},c',r'}\}_{c',r'}}
    \,\middle\|\,
    P_{\{E_{\mathrm{dsi},c',r'}^{(-G_{c,r,A})}\}_{c',r'}}
    \right)
    =
    D_f\!\left(
    P_{F(C_{c,r})+e_{c,r}}
    \,\middle\|\,
    P_{F(R_{c,r,A})+e_{c,r}}
    \right)
    \le \gamma_{c,r,A}.
\]
Thus the joint release of all DSI-sanitized centroids preserves the same bound by parallel composition over disjoint class/cluster partitions.
Conditions (1) and~(3) ensure that the final synthetic dataset $\mathcal{Y}$ is a randomized function only of the DSI-sanitized centroids and public model components, with no further dependence on $\mathcal{X}$.
The post-processing property of DSI~\citep{xiao2025trustworthy} therefore preserves the indistinguishability bound on the final synthetic text release.
\end{proof}

\end{document}